\documentclass[12pt]{article}
\usepackage[margin=1in]{geometry}
\usepackage{longtable}
\usepackage{titlesec}\titleformat{\section}  {\normalfont\Large\bfseries}  {\thesection}{1em}  {}
\titleformat{\subsection}  {\normalfont\large\bfseries}  {\thesubsection}  {1em}
  {}

\usepackage{amssymb}

\newcommand{\R}{\mathbb{R}}
\newcommand{\OO}{\mathcal{O}}

\usepackage{dsfont}
\usepackage{xcolor}
\newcommand{\hlroi}[1]{\textcolor{red}{#1}}

\usepackage{natbib}
\usepackage{tabularx}
\usepackage{longtable}

\usepackage{algorithm, algorithmic}
\usepackage{enumitem}
\usepackage{amsmath}
\usepackage{amsthm}
\usepackage{hyperref}
\hypersetup{colorlinks=true, allcolors=blue}
\usepackage{authblk}
\usepackage{tabularx}
\usepackage{booktabs}
\usepackage{graphicx}
\newtheorem{theorem}{Theorem}
\newtheorem{remark}{Remark}

\usepackage{lineno}

\title{SpARCD: A Spectral Graph Framework for Revealing Differential Functional Connectivity in fMRI Data} 

\author[1]{Shira Yoffe}
\author[3]{Ziv Ben-Zion}
\author[6]{Guy Gurevitch}
\author[4,5,6,7]{Talma Hendler}
\author[2]{Malka Gorfine}
\author[1]{Ariel Jaffe}

\affil[1]{Department of Statistics and Data Science, Hebrew University of Jerusalem, Jerusalem, Israel}

\affil[2]{Department of Statistics and Operations Research, Tel Aviv University, Tel Aviv, Israel}

\affil[3]{School of Public Health, Faculty of Social Welfare and Health Sciences, University of Haifa, Haifa, Israel}

\affil[4]{Sagol School of Neuroscience, Tel Aviv University, Tel Aviv, Israel}

\affil[5]{Sagol Brain Institute, Tel Aviv Sourasky Medical Center, Tel Aviv, Israel}

\affil[6]{Gray Faculty of Medical and Health Sciences, Tel Aviv University, Tel Aviv, Israel}

\affil[7]{School of Psychological Sciences, Faculty of Social Sciences, Tel Aviv University, Tel Aviv, Israel}

\date{}

\begin{document}

\maketitle

\begin{abstract}

Identifying brain regions that exhibit altered functional connectivity between cognitive or emotional states is a fundamental problem in neuroscience. We propose SpARCD (Spectral Analysis for Revealing Connectivity Differences), a statistical framework for detecting 
detecting condition-specific patterns of functional connectivity.
SpARCD uses distance correlation, a dependence measure sensitive to both linear and nonlinear associations, to construct weighted region-wise connectivity graphs for each condition. A differential operator obtained through spectral filtering is then used to identify connectivity changes via its leading eigenvectors.
To assess statistical significance, we develop a permutation-based testing procedure that yields interpretable region-level significance maps. We establish finite-sample validity of the permutation test and derive asymptotic guarantees for the stability of the resulting region rankings. Simulation studies demonstrate improved power relative to conventional edge-wise and univariate approaches, particularly in settings with nonlinear dependence structures.
We applied SpARCD to fMRI data from 113 individuals with early-stage PTSD and 42 controls during emotional and neutral task conditions. The method identified distinct connectivity networks associated with visual processing in both PTSD and control participants. 
Resting-state comparisons between PTSD and control participants highlighted similar visual networks. SpARCD provides a statistically rigorous and computationally efficient framework for comparing high-dimensional connectivity patterns.
 
\end{abstract}











\section{Introduction}

Identifying variables associated with distinct biological or clinical states is a crucial task across many scientific disciplines \citep{smyth2004linear, love2014moderated}. In many applications, differences between states do not arise from the additive effects of a small number of variables, but rather from coordinated changes in groups of interdependent variables \citep{daudt2018urban, xiao2018alternating, zhao2021detection}.
This perspective is particularly relevant in neuroscience, where coordinated activity within brain networks has been linked to behavior and clinical outcomes \citep{smith2009correspondence, crossley2013cognitive}.
Motivated by this principle, in this paper, we develop a framework for analyzing functional magnetic resonance imaging (fMRI) scans to identify brain networks whose connectivity patterns are strongly associated with a medical condition. 

fMRI is a non-invasive neuroimaging technique that indirectly measures brain activity via blood-oxygen-level–dependent (BOLD) signals, reflecting changes in local blood oxygenation that are tightly coupled to neural activity \citep{logothetis2008we}.
A major challenge in fMRI analysis is that regional BOLD signal levels can vary substantially across individuals for reasons unrelated to clinical state.
To mitigate this limitation, researchers often analyze functional connectivity (FC) between brain regions, typically estimated through statistical dependence between their BOLD time series. 
By focusing on coordinated activity rather than absolute regional signal levels, FC provides a network-level representation of brain function.
The robustness of FC has made it a central tool for studying brain networks associated with cognitive, emotional, and clinical differences \citep{greicius2004default, anand2005activity, liao2010altered,mohanty2020rethinking, finn2015functional, van2010intrinsic}. However, reliably detecting connectivity changes remains challenging due to the high dimensionality and complex interdependencies among brain regions \citep{zalesky2010network}.


In this work, we introduce \textit{SpARCD} (Spectral Analysis for Revealing Connectivity Differences), a novel approach for detecting brain networks inspired by recent advances in spectral graph theory. For each condition, our method constructs a graph whose nodes represent brain regions and whose edges encode their functional connectivity. The problem of identifying brain networks that distinguish two conditions is thus formulated as a graph-theoretic task: given two graphs with identical node sets but differing edge weights, detect subsets of nodes whose connectivity profiles differ significantly between the graphs. This spectral formulation provides an efficient, interpretable framework for uncovering network alterations in brain connectivity.

\subsection{Main application} Our primary motivating application is the analysis of fMRI data collected under the Emotional Face Matching Task (EFMT, or Hariri task) \citep{hariri2002amygdala}, a widely used framework for probing emotional reactivity and identifying neural changes associated with emotional processing \citep{savage2024dissecting}. During the task, participants match faces displaying various emotional expressions and geometric shapes that serve as controls, yielding BOLD signals corresponding to emotional and neutral conditions.
This design allows for within-subject comparisons of emotional and neutral states. Prior studies of EFMT revealed FC alterations in conditions such as alcohol use disorder \citep{gorka2013alcohol, o2012withdrawal}, depression \citep{carballedo2011functional}, and schizophrenia \citep{goghari2017task}. Here, we analyze fMRI scans from individuals with PTSD, a condition in which previous work has extensively examined activation patterns in emotion-related brain regions. Investigating dysregulated connectivity patterns among these regions can provide unique insights into the neural basis of PTSD and reveal mechanisms underlying symptom development \citep{shin2010neurocircuitry, etkin2007functional, sripada2012neural}.

We analyze fMRI data from the cohort described in \cite{ben2019neurobehavioral}. The study enrolled 171 adults who underwent clinical and neuroimaging assessments at three post-trauma time points: one month, six months, and fourteen months. Sixteen participants were excluded due to incomplete or poor-quality imaging at the first session, yielding a final cohort of 155 individuals, of whom 113 met full PTSD criteria at the first time-point. Our analysis focuses on resting-state fMRI and the EFMT taken at the first time point, one month post-trauma. This cohort is notable for its large sample size relative to typical neuroimaging studies, the early timing of scanning (one month after trauma), and the deliberate inclusion of participants with high PTSD symptom severity. 
These features provide a unique opportunity to investigate neural correlates of emotion processing and early markers of PTSD shortly after trauma exposure. 

\subsection{Problem Setting and related work} \label{sec: problem setting}


The EFMT consisted of four blocks of facial expressions and five blocks of shapes. For our analysis, we excluded the first block of shapes, corresponding to the initial scanning block, due to its higher noise levels. 

The resulting fMRI data were then parcellated into $R = 113$ regions of interest (ROIs) according to the Harvard–Oxford atlas, comprising 96 bilateral cortical regions and 17 subcortical regions. A detailed mapping of these brain regions is provided in Appendix \ref{app:Atlas mapping}.


Parcellation of the fMRI recordings from the two conditions yields two corresponding datasets: $X \in \R^{n_X \times R \times T},  Y \in \R^{n_Y \times R \times T}$, where $R$ denotes the number of ROIs, $T$ the number of time samples and $n_X,n_Y$ are the number of scans in the two conditions, respectively. The datasets $X$ and $Y$ may be paired, meaning that every scan in $X$ corresponds to a scan in $Y$, or unpaired.  
Our goal is to identify brain networks whose FC patterns differ between the two conditions. To facilitate clarity in the upcoming technical sections, we summarize key notation used throughout the paper in Table~\ref{tab:notation}.

Many studies have analyzed FC between brain regions to gain insight into the organization of brain networks \citep{van2010exploring}.
A common strategy is to test each edge separately, for example, by comparing correlation coefficients between groups and correcting for multiple comparisons \citep{whitfield2012conn}. 
Such procedures, typically based on controlling the False Discovery Rate (FDR), substantially reduce statistical power and may miss effects with weak neuronal signal \citep{nichols2003controlling, eklund2016cluster, zhang2011multiple}. Moreover, univariate tests are inherently limited in their ability to capture coordinated, subnetwork-level alterations \citep{zalesky2010network}.

The Network-Based Statistic (NBS) \citep{zalesky2010network} is a statistical framework designed to address the multiple-comparison problem in edge-wise connectivity analyses by computing a graph whose nodes and edge weights represent brain regions and some measure of statistical association between them.  The method computes a sparse graph by applying a threshold to the graph's connectivity matrix and identifies connected components (subnetworks) within the sparse graph. Finally, a subnetwork-level statistic, such as the subnetwork’s size or summed edge weights, is computed and assessed for significance via permutation testing. By aggregating effects across interconnected edges, NBS increases sensitivity to distributed network-level changes, though its dependence on a fixed primary threshold may reduce robustness and limit detection of small or overlapping effects.

Psychophysiological interaction (PPI) analysis is another widely used approach \citep{friston1997psychophysiological} that has been shown to produce robust and reproducible results \citep{smith2016toward}. 
In contrast to network-level approaches, PPI estimates how task-related changes in neuronal activity modulate the connectivity between a seed region and the rest of the brain. 
However, its main limitation is its reliance on a single a priori seed selection, which can affect the reliability of the results \citep{li2023effect}. In addition, PPI may provide only a partial view of FC, as it does not account for interactions between non-seed regions \citep{gerchen2014analyzing}.

In this work, we introduce a graph-based method for detecting changes in functional connectivity between two fMRI states, which is based on differences in the leading eigenvectors between graphs. Graph-based methodologies were proposed for various tasks in fMRI analysis. For example, small-world network measures have been used to characterize alterations in conditions such as Alzheimer's disease \citep{sanz2010loss}. Within this framework, geodesic-distance-based metrics such as characteristic path length and global efficiency quantify network-level communication and integration. These measures have been shown to be sensitive to disease-related changes in FC \citep{bullmore2009complex, liao2010altered}. 
\citet{jacob2016dependency,jacob2019reappraisal} derived DepNA, which models directed influence relationships among regions, and used it to examine whether connectivity hierarchies
distinguish patients with social anxiety disorder from controls. The work of \citet{ben2017tired} evaluates the relative contribution of each ROI to the separation between two psychological states.
Spectral graph-based approaches have also been applied to fMRI data to uncover functional organization. For instance, spectral clustering has been utilized to create functionally homogeneous parcellations from resting-state fMRI data \citep{craddock2012whole, shen2013groupwise}. Additionally, \citet{cribben2017estimating} applied spectral methods to detect changes in network structure. \citet{cahill2016multiple} utilized spectral methods for group-wise functional community detection.

Here, we focus on spectral graph analysis, which examines the eigenstructure of graph Laplacians to reveal differences in connectivity patterns between states \citep{sristi2022disc, yoffe2024spectral}. 
Our work combines several methodological, theoretical, and empirical components.
First, we develop a spectral graph-based framework designed to capture coordinated shifts in network structure. For each dataset corresponding to a specific state (e.g., a particular stimulus condition), we construct a graph whose edge weights represent estimates of FC between pairs of regions. We estimate FC using distance correlation \citep{Szkely2007distance}, which captures both linear and nonlinear dependencies. This method has proven to provide a more reliable and robust estimate of connectivity, particularly when applied to real data \cite{geerligs2016functional}. Instead of exhaustive pairwise comparisons, we identify meaningful differences in network organization by comparing the leading eigenvectors of the respective Laplacians. Second, we propose a permutation-based inference framework for assessing the statistical significance of these spectral differences, explicitly accounting for data dependencies and yielding interpretable region-level significance scores. On the theoretical side, we establish finite-sample validity of the permutation procedure and derive asymptotic theoretical guarantees for the stability of the eigenvector-based region ranking. Finally, through extensive simulations and an application to fMRI data from PTSD patients engaged in an emotional reactivity task, we demonstrate the utility of the proposed method in uncovering clinically relevant, network-level alterations distinguishing emotional from neutral states.

\begin{table}
\centering
\small
\begin{tabular}{llp{0.58\textwidth}}
\toprule
\textbf{Symbol} & \textbf{Dimensions} & \textbf{Description} \\
\midrule

$X$
& $\mathbb{R}^{n \times R \times T}$
& Tensor containing the signal matrices of $R$ brain regions for $n$ samples across $T$ time points (similarly for $Y$). \\

$X^{(i)}$
& $\mathbb{R}^{R \times T}$
& Signal matrix of the $i$-th sample (similarly $Y^{(i)}$). \\

$X_r$
& $\mathbb{R}^{n \times T}$
& Signal matrix of the $r$-th region across $n$ samples (similarly $Y_r$). \\

$\widehat W_X$
& $\mathbb{R}^{R \times R}$
& Empirical similarity (connectivity) matrix constructed from the observed data in state $X$ (similarly $\widehat W_Y$). \\

$\widehat L_X$
& $\mathbb{R}^{R \times R}$
& Empirical symmetric normalized Laplacian constructed from $X$ (similarly $\widehat L_Y$). \\

$\widehat v_X^{(k)}$
& $\mathbb{R}^{R}$
& $k$-th eigenvector of $\widehat L_X$ (similarly $\widehat v_Y^{(k)}$). \\

$\widehat Q_X$
& $\mathbb{R}^{R \times R}$
& Empirical projection matrix onto the subspace orthogonal to the span of the leading $K$ eigenvectors of $\widehat L_X$ (similarly $\widehat Q_Y$). \\

$\widehat{L}^{\sf filt}_X$
& $\mathbb{R}^{R \times R}$
& Empirical filtered spectral operator associated with $\widehat L_X$ (similarly $\widehat{L}^{\sf filt}_Y$). \\

$\widehat L_d$
& $\mathbb{R}^{R \times R}$
& Empirical differential operator defined as $\widehat{L}^{\sf filt}_Y - \widehat{L}^{\sf filt}_X$. \\

$\widehat s(r)$
& $\mathbb{R}$
& Empirical region-wise score (test statistic) for region $r$. \\

$\widehat s_{\mathrm{PPI}}(r)$
& $\mathbb{R}$
& Region-wise score obtained from the PPI-based method. \\

$\widehat s_{\mathrm{UC}}(r,r')$
& $\mathbb{R}$
& Edge-wise score obtained from the UC method. \\

\bottomrule
\end{tabular}
\caption{Summary of notation. Population-level counterparts are denoted by removing hats.}
\label{tab:notation}
\end{table}
\section{Relevant Background}

Our method for identifying networks that differentiate between two states is based on constructing a separate graph for each state from the corresponding fMRI scans.
In this section, we briefly review two topics relevant to our approach.
Section \ref{sec. Graph-based differential analysis} introduces key definitions required to understand the spectral analysis of the dual graphs.
In our framework, the weight between two nodes represents the estimated FC between the corresponding brain regions.
Unlike previous studies, we estimate FC using distance correlation \citep{Szkely2007distance}.
Section \ref{sec. Distance correlation} formally defines distance correlation and explains why it is particularly well-suited for quantifying FC between brain regions.

\subsection{Spectral Graph Analysis}
\label{sec. Graph-based differential analysis}

FC, as estimated from fMRI, can be naturally represented using a graph framework. In this representation, each node corresponds to a distinct brain region, and the edge weights between pairs of nodes encode their FC. 
Specifically, for each condition, we construct an empirical undirected weighted graph with $R$ nodes corresponding to distinct brain regions. The pairwise statistical association between regions, typically derived from measures such as correlation, is encoded in the empirical symmetric weight matrices $\widehat W_X,\widehat W_Y \in \R^{R \times R}$.

For each graph, we compute a \emph{degree matrix}, $\widehat C_X, \widehat C_Y  \in \R^{R \times R}$, defined as diagonal matrices containing the total connectivity (degree) of each node. Specifically,
\[
\widehat C_X(r,r) = \sum_{r'=1}^R \widehat W_X(r,r') \, , \qquad 
\widehat C_Y(r,r) = \sum_{r'=1}^R \widehat W_Y(r,r') \, .
\]
Thus, $\widehat C_X(r,r)$ represents the total connections of brain region $r$ under condition $X$. In the context of brain networks, $\widehat C_X(r,r)$ can be interpreted as the total functional connectivity of brain region $r$ with the rest of the brain under condition $X$.

We construct the empirical \emph{symmetric normalized Laplacian matrices}, $\widehat L_X$ and $\widehat L_Y$, by
\begin{equation}
    \label{eq: laplacians}
   \widehat L_X = I_R - \widehat C_X^{-1/2} \widehat W_X \widehat C_X^{-1/2}\, , \qquad \widehat L_Y = I_R - \widehat C_Y^{-1/2} \widehat W_Y \widehat C_Y^{-1/2} \, ,
\end{equation}
where $I_R \in \R^{R \times R}$ denotes the identity matrix. The normalized Laplacian encodes the structural properties of the graph in a manner that is particularly suitable for spectral analysis \citep{chung1997spectral}.
Intuitively, each diagonal matrix $\widehat C_X$ (or $\widehat C_Y$) reflects the total connectivity of a node, while $\widehat L_X$ (or $\widehat L_Y$) balances this connectivity against the weights of its edges. This emphasizes relative connectivity patterns rather than raw connection strength, thereby highlighting global and local structural features of the network. In the context of brain networks, the empirical normalized Laplacian captures how strongly each brain region is connected relative to its overall connectivity profile, allowing us to characterize condition-specific differences in brain organization.

The eigenvectors of the empirical graph Laplacian, denoted by $\widehat v_X^{(1)}, \dots, \widehat v_X^{(R)}$ and $\widehat v_Y^{(1)}, \dots, \widehat v_Y^{(R)}$, are ordered according to their corresponding eigenvalues in non-decreasing order. These eigenvectors capture dominant modes of variation within the network \citep{von2007tutorial}. In particular, the first $K\leq R$ eigenvectors (associated with the smallest eigenvalues) provide a low-dimensional embedding of the graph that preserves its global connectivity structure \citep{belkin2003laplacian, coifman2006diffusion}. Such spectral representations have been widely applied in clustering \citep{ng2001spectral}, manifold learning \citep{belkin2003laplacian, coifman2006diffusion}, and network comparison \citep{wilson2008graph}. In our context, the spectral embeddings provide a compact feature representation for each brain region, enabling informative comparisons of connectivity patterns across conditions.

\subsection{Distance Correlation}\label{sec. Distance correlation}
Distance correlation is a measure of statistical dependence between random vectors \citep{Szkely2007distance}. In our setting, we use distance correlation to estimate FC between two brain regions. Let the BOLD activity in regions $r$ and $r'$ across $n$ subjects be represented by matrices  $X_r \in \R^{n \times T}$ and $X_{r'} \in \R^{n \times T}$, respectively. 
To estimate the distance correlation (dCor), we first compute the pairwise Euclidean distance matrices $D_r$ and $D_{r'}$, both of dimension $n \times n$.
Denote by $\mathds 1_n$ a vector of ones of size $n$, by $I_n$ an identity matrix of size $n \times n$, and by $H = I_n - \mathds 1_n \mathds 1_n^T/n$ the centering operator that projects onto the orthogonal complement of $\mathds 1_n$. The centered distance matrices are then given by $\tilde D_r = H D_r H$ and $\tilde D_{r'} = H D_{r'} H$, which ensures that both row and column sums are zero. 
For two matrices $A$ and $B$, let the inner product be defined as $\langle A,B\rangle = \sum_{i,j}A_{ij}B_{ij}$. The estimator of the distance correlation measure between $X_r$ and $X_{r'}$ is then
\begin{equation}\label{eq:dcor}
\widehat{\mathcal D}(X_r,X_{r'}) = \frac{\langle \tilde D_r,\tilde D_{r'} \rangle}{\sqrt{\langle \tilde D_r,\tilde D_r \rangle \langle \tilde D_{r'},\tilde D_{r'} \rangle}} \, .
\end{equation}
The distance correlation possesses several properties that make it particularly suitable for estimating FC between brain regions. 
At the population level, distance correlation is bounded between 0 and 1, and equals zero if and only if the two random vectors are statistically independent.
Since $\widehat{\mathcal D}(X_r, X_{r'})$ is based on Euclidean distances, it is invariant to rotations, scaling, and translations of $X_r$ and $X_{r'}$.
Most importantly, it also captures nonlinear dependencies between $X_r$ and $X_{r'}$, which linear correlation measures fail to detect. This flexibility is particularly relevant in fMRI settings, where relationships between regional BOLD signals may deviate from linear dependence. One limitation of distance correlation is its computational cost, which scales as $\OO(n^2)$ due to the computation of $D_r$ and $D_{r'}$.

\begin{remark}[Population and empirical distance correlation]
Assume that the rows of $X_r$ and $X_{r'}$ are i.i.d. paired samples
from a joint distribution on $\R^T \times \R^T$, with finite first
moments.
Under these conditions, the empirical distance correlation $\widehat{\mathcal D}(X_r,X_{r'})$ computed from $X_r$ and $X_{r'}$ converges to a population distance correlation $\mathcal D(r,r')$ as $n \to \infty$, see \cite[Theorem 2]{Szkely2007distance}. 

Throughout the paper, we use $\widehat W,\widehat L$ etc. to denote the empirical graph matrices computed from the data, and by $W,  L$ etc. their population counterparts.
In Section \ref{sec:theory}, 
we use the convergence of $\widehat{\mathcal D}(X_r,X_{r'})$ to 
$\mathcal D(r,r')$ to study the stability of the proposed spectral operators and region rankings.

\end{remark}

\section{SpARCD: Spectral Analysis for Revealing Connectivity Differences}\label{sec. sparcd}

We propose a method for identifying brain regions whose connectivity patterns differ between two states or conditions (e.g., faces versus shapes). 
The procedure consists of four steps (Figure~\ref{fig:flowchart}):
(I) construct a connectivity graph for each condition;
(II) apply spectral filtering to isolate differential structure; 
(III) compute region-level scores from the resulting differential operator; and 
(IV) assess significance using permutation testing and multiple-testing correction. We next describe each of these steps in detail. 




Step (I): For each pair of regions $(r, r')$, we compute the sample distance correlation between their respective signal matrices and use it as the edge weight connecting them. This yields two symmetric weight matrices,
\begin{equation}\label{eq:edge_weights}
\widehat W_X(r,r') = \widehat{\mathcal{D}}\big(X_r, X_{r'}\big) \, , \qquad 
\widehat W_Y(r,r') = \widehat{\mathcal{D}}\big(Y_r, Y_{r'}\big) \, ,
\end{equation}
corresponding to the two conditions under comparison, where $\widehat{\mathcal{D}}(\cdot,\cdot)$ denotes the sample distance correlation, as described earlier. These matrices estimate the pairwise association structure for each state. We then compute the corresponding empirical graph Laplacians $\widehat L_X$ and $\widehat L_Y$ using Eq.~\eqref{eq: laplacians}, which serve as our representations of functional connectivity in each state.


Step (II): Let $\widehat Q_X$ and $\widehat Q_Y$ denote the empirical projection matrices onto the subspaces orthogonal to the leading $K$ eigenvectors of $\widehat L_X$ and $\widehat L_Y$, respectively,
\begin{equation}\label{eq:projection}
    \widehat Q_X = I_R - \sum_{k=1}^K \widehat v_X^{(k)}\widehat v_X^{{(k)}\top } \, , \quad \widehat Q_Y = I_R - \sum_{k=1}^K \widehat v_Y^{(k)}\widehat v_Y^{{(k)} \top} \, .
\end{equation}

The leading eigenvectors of a graph Laplacian span the subspace corresponding to its most dominant (low-frequency) connectivity patterns~\citep{chung1997spectral}. Therefore, structural similarities between the graphs $\widehat L_X$ and $\widehat L_Y$ will tend to be captured within these top eigenvectors. To isolate differences, we construct empirical filtered operators by projecting each graph operator onto the orthogonal complement of the dominant eigenspace of the other graph
\[
\widehat{L}^{\sf filt}_Y = \widehat Q_X (I_R-\widehat L_Y) \widehat Q_X \, , \qquad \widehat{L}^{\sf filt}_X = \widehat Q_Y (I_R-\widehat L_X ) \widehat Q_Y \, .
\]
This projection removes components of $\widehat L_Y$ (or $\widehat L_X$) that lie in the direction of the leading connectivity patterns of $\widehat L_X$ (or $\widehat L_Y$), effectively filtering out shared structure and emphasizing differential connectivity.

Step (III): To quantify the difference in connectivity between two graphs, we compute the difference between their respective empirical differential operators via
\[
\widehat L_d = \widehat{L}^{\sf filt}_Y - \widehat{L}^{\sf filt}_X \, .
\]
The leading eigenvector of $\widehat L_d$, corresponding to the eigenvalue of $\widehat L_d$ with the largest absolute value and denoted by $\widehat v_d$, highlights the directions along which the two graphs differ the most, effectively capturing differences in connectivity structure between the two conditions. Since eigenvectors are defined only up to sign, we define the region-level test statistic by
\[
\widehat s(r) = \frac{|\widehat v_d(r)|}{\|\widehat v_d\|_1} \, ,
\]
where $\|\cdot\|_1$ denotes the $\ell_1$ norm. The absolute value removes the sign ambiguity of the eigenvector, while the $\ell_1$ normalization yields interpretable unit-sum scores across regions and emphasizes sparse differential structure. Regions with large values of $\widehat s(r)$ contribute strongly to the differential connectivity structure.

Step (IV): Under the null hypothesis, the connectivity structure is identical across the two conditions. To assess significance, we employ a nonparametric permutation procedure. For each permutation $\pi$, we generate permuted datasets $\{\tilde X,\tilde Y\}$ by reassigning condition labels and compute the corresponding test statistic $\widehat s^{(\pi)}(r)$. When the filtering parameter $K$ in Eq.~\eqref{eq:projection} is selected in a data-driven manner, it is re-estimated within each permutation to preserve the validity of the inference.

Under the null hypothesis and the corresponding exchangeability assumptions, the observed statistic $\widehat s(r)$ and its permutation counterpart $\widehat s^{(\pi)}(r)$ have the same distribution for every region $r$ and permutation $\pi$. Namely, 
\[
\widehat s(r) \overset d = \widehat s^{(\pi)}(r),
\qquad \mbox{for all } r \in \{1,\ldots,R\} \, , \quad \mbox{for all } \pi \in \{1,\ldots,B\} \, .
\]
The empirical $p$-value for region $r$ is computed as
\[
p_r =
\frac{1}{B+1}
\left\{
1+
\sum_{b=1}^B
\mathbb I
\left(
\widehat s^{(\pi)}(r)\ge \widehat s(r)
\right)
\right\}.
\]
To account for multiple testing across regions, we apply the Benjamini--Hochberg (BH) procedure at false discovery rate level $\alpha=0.05$.


{\bf Remark:} In the Hariri protocol, each participant completes alternating blocks of emotional (face) and neutral (shape) stimuli. Within each block, the BOLD time series exhibits strong temporal dependence and reflects a sustained cognitive state induced by the stimulus. Consequently, individual time points are not exchangeable, and permutation at the time-point level would violate the exchangeability assumptions required for valid permutation inference.
Instead, we treat each block as the basic unit of analysis. Under the null hypothesis that emotional and neutral conditions induce identical connectivity structure, block labels are exchangeable within each participant.   This motivates a block-level permutation scheme, in which the observed blocks are randomly reassigned to the two conditions separately for each participant.
This approach preserves the temporal coherence within blocks while respecting the experimental design and ensuring valid nonparametric inference, as shown in Theorem 1.

\subsection{Data-Driven Estimation of The Hyperparameter \texorpdfstring{$K$}{K}}\label{app:Data_driven K}
The hyperparameter $K$ controls the number of leading eigenvectors included in the construction of the projection operators $Q_X$ and $Q_Y$ (Step~II, Section~\ref{sec. sparcd}). 
Selecting too few eigenvectors (small $K$) retains too much shared structure between the graphs, potentially masking region-specific differences. Conversely, selecting too many eigenvectors (large $K$) may remove dominant connectivity patterns, thus reducing meaningful signal.
To guide the choice of $K$, we adopt a data-driven criterion that reflects the magnitude of regional differences captured by the test statistic $\widehat s(r)$. 
For each candidate $K \in \{1, 2, \dots, R\}$, we compute the score vector $\widehat s_{(K)}$, and its $\ell_2$ norm:
$\eta(K) = \|\widehat s_{(K)}\|_2$.
Since $\|\widehat s_{(K)}\|_1 = 1$ by construction, $\eta(K)$ serves as a measure of imbalance in $\widehat s_{(K)}$: higher values correspond to pronounced contrast between regions. We then select
$K^\ast = \arg\max_K \eta(K)$.
For the permutation test (Step~IV, Section~\ref{sec. sparcd}), $K^\ast$ is re-computed separately for each permutation to maintain the validity of the procedure.
In Section \ref{sec: application to hariri} we present a sensitivity analysis for $K$.

\begin{figure}[tb]
    \centering
    \includegraphics[width = 1.1\linewidth]{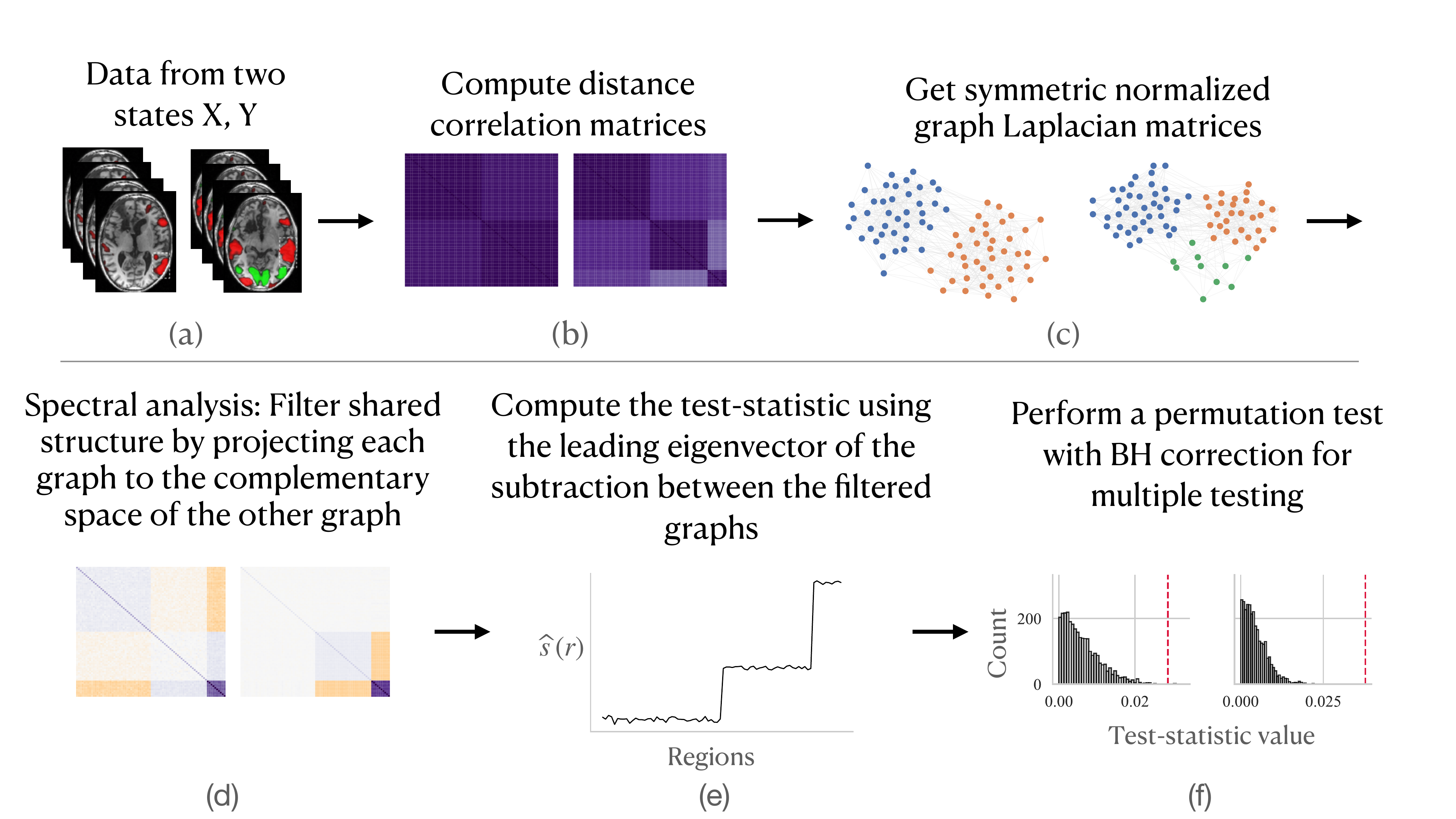}
    \caption{Flowchart describing the steps in the SpARCD algorithm.}
    \label{fig:flowchart}
\end{figure}

\subsection{Theoretical Guarantees}\label{sec:theory}

We provide two complementary theoretical guarantees for the proposed method. The first result establishes the finite-sample validity of the permutation test under a block-exchangeability assumption tailored to the experimental design. The second result provides an asymptotic stability analysis of the eigenvector-based region ranking, showing that the estimated scores and the resulting top-$K$ region set are robust to perturbations of the contrast operator. Together, these results justify both the inferential procedure (via valid $p$-values) and the stability of the resulting region ranking.

\begin{theorem}[Finite-sample validity of the block permutation test]
Fix a brain region \(r\), and let \(\widehat s(r)\) denote the observed test statistic computed from block data \(\mathcal B\) and condition labels \(A\). Let \(\mathcal G\) denote the set of allowable block permutations, and let \(\widehat s^{(\pi)}(r)\) denote the statistic computed from the permuted labels \(\pi A\), for \(\pi\in\mathcal G\). Assume that under the null hypothesis, the condition labels are exchangeable over \(\mathcal G\) conditional on the unlabeled block structure; that is,
\[
(\mathcal B,A)\mid \mathcal O
\overset d =
(\mathcal B,\pi A)\mid \mathcal O \, ,
\qquad \mbox{for all } \pi\in\mathcal G \, ,
\]
where
\[
\mathcal O=\{(\mathcal B,\pi A):\pi\in\mathcal G\}
\]
denotes the permutation orbit.
If the statistic is computed using the same deterministic procedure for both observed and permuted data, and if all permutations in \(\mathcal G\) are enumerated, then the exact permutation \(p\)-value
\[
p_r^*
=
\frac{1}{|\mathcal G|}
\sum_{\pi\in\mathcal G}
\mathbb I
\left\{
\widehat s^{(\pi)}(r)\ge \widehat s(r)
\right\}
\]
satisfies
$
{\Pr}_{H_0}(p_r^*\le\alpha\mid\mathcal O)\le\alpha
$
for every \(\alpha\in[0,1]\). Consequently, ${\Pr}_{H_0}(p_r^*\le\alpha)\le\alpha$.
\end{theorem}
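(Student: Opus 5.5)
The argument is the standard exact-permutation-test argument. Assume $\mathcal G$ is a group, as it is for within-participant block relabelings. Then for every $\pi\in\mathcal G$ we have $\mathcal G\pi=\mathcal G$, and the orbit $\mathcal O$ is the same whether it is generated from $A$ or from $\pi A$.

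First fix the orbit $\mathcal O$ and define the multiset of statistic values $\mathcal S=\{\widehat s^{(\pi)}(r):\pi\in\mathcal G\}$. Because the statistic is computed by the same deterministic map $(\mathcal B,\cdot)\mapsto \widehat s(r)$ for observed and permuted labels, including any data-driven choice of $K$ re-run inside the map, $\mathcal S$ is a function of $\mathcal O$ alone. Write $\widehat s(r)=s(\mathcal B,A)$, so that $\widehat s^{(\pi)}(r)=s(\mathcal B,\pi A)$. Then
\[
p_r^*=\frac{1}{|\mathcal G|}\sum_{\pi\in\mathcal G}\mathbb I\{s(\mathcal B,\pi A)\ge s(\mathcal B,A)\}=:F(\mathcal B,A),
\]
and replacing $A$ by $\sigma A$ with $\sigma\in\mathcal G$ only reindexes the sum. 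Hence $F(\mathcal B,\sigma A)$ is the fraction of elements of $\mathcal S$ that are at least $s(\mathcal B,\sigma A)$.

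Second, use exchangeability. Conditional on $\mathcal O$, the null assumption $(\mathcal B,A)\mid\mathcal O\overset d=(\mathcal B,\pi A)\mid\mathcal O$ implies the following. If $\Pi$ is uniform on $\mathcal G$ and independent of the data, then $(\mathcal B,A)\mid\mathcal O$ has the same law as $(\mathcal B,\Pi A)\mid\mathcal O$. To see this, average the assumption over $\pi$ and use that the orbit is invariant under $\pi$. Therefore
\[
\Pr_{H_0}(p_r^*\le\alpha\mid\mathcal O)=\Pr\big(F(\mathcal B,\Pi A)\le\alpha\mid\mathcal O\big)=\frac{1}{|\mathcal G|}\,\#\{\sigma\in\mathcal G: F(\mathcal B,\sigma A)\le\alpha\}.
\]
Conditioning on the realized $(\mathcal B,A)$ within the orbit makes the right-hand side deterministic.

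Third, carry out the counting step. Order the values $t_\sigma=s(\mathcal B,\sigma A)$, $\sigma\in\mathcal G$, and let $N=|\mathcal G|$. For each $\sigma$, $F(\mathcal B,\sigma A)=\#\{\pi:t_\pi\ge t_\sigma\}/N$. Suppose $m$ elements satisfy $F\le\alpha$. Take $\sigma^*$ among them with the smallest value $t_{\sigma^*}$. Every such $\sigma$ has $t_\sigma\ge t_{\sigma^*}$, so $F(\mathcal B,\sigma^* A)\ge m/N$. Combined with $F(\mathcal B,\sigma^* A)\le\alpha$, this gives $m/N\le\alpha$. Ties only increase $F$, so they help. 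This proves $\Pr_{H_0}(p_r^*\le\alpha\mid\mathcal O)\le\alpha$. The unconditional statement then follows by the tower property, taking the expectation over $\mathcal O$.

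The main obstacle is the second step. One must make precise that conditional exchangeability yields uniformity of the "position" $\sigma$ within the orbit. This needs the group property of $\mathcal G$, so that $\mathcal G\pi=\mathcal G$ and $\pi$ maps the orbit to itself. It also needs care when distinct $\pi$ give identical labelings $\pi A$ (nontrivial stabilizers). In that case I would count with multiplicity over $\mathcal G$ rather than over distinct labelings; the uniform-$\Pi$ representation handles this automatically.
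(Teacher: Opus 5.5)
Your proposal is correct and follows the same route as the paper's proof. Conditional on the orbit $\mathcal O$, the observed labeling is uniform over $\mathcal G$, so the rank of $\widehat s(r)$ among the permuted statistics is uniform up to ties; this gives $\Pr_{H_0}(p_r^*\le\alpha\mid\mathcal O)\le\alpha$, and the tower property removes the conditioning. Your version is more careful than the paper's: you make the group property of $\mathcal G$ explicit (which the paper uses implicitly for the orbit to be invariant), you handle stabilizer multiplicities, and you spell out the counting step with ties that the paper only asserts.
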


The theorem considers the exact permutation $p$-value obtained by enumerating all permutations in $\mathcal G$.
In practice, enumerating all permutations in $\mathcal G$ is typically computationally infeasible. Therefore, the procedure described above approximates the exact permutation distribution using a large Monte Carlo sample of $B$ randomly generated permutations.

The following theorem establishes the stability of the proposed region-ranking procedure. It shows that if the estimated contrast matrix $\widehat L_d$ is close to its population counterpart $L_d$ in spectral norm and the leading eigenvalue of $L_d$ is sufficiently separated from the rest of the spectrum, then the leading eigenvector, the region-level scores, and the resulting top-$K$ ranking are stable under estimation error. Consequently, the eigengap plays an important practical role in determining ranking stability.

For a matrix $A$, let $\|A\|_2=\sup_{\|x\|_2=1}\|Ax\|_2$
denote the spectral norm. We assume the high-level consistency condition
$\|\widehat L_d-L_d\|_2=o_p(1)$,
which is justified by the consistency of empirical distance correlation \citep{Szkely2007distance,li2012feature} and the continuity of the graph and spectral transformations used to construct $L_d$.

\begin{theorem}[Stability of eigenvector-based region ranking]
\label{thm:eig-stability-ranking}

Let $R$ denote the number of brain regions, treated as fixed while $n\to\infty$. Let $L_d$
be the population contrast matrix, and let $\widehat L_d$ denote its empirical counterpart obtained from the proposed estimation procedure. Assume
\[
\Delta_n=\|\widehat L_d-L_d\|_2=o_p(1).
\]
Let $\lambda_1,\ldots,\lambda_R$ denote the eigenvalues of $L_d$, ordered so that
\[
|\lambda_1|>|\lambda_2|\ge\cdots\ge|\lambda_R| \, ,
\]
and assume that the leading absolute eigenvalue is separated from the rest of the spectrum. Specifically, define $\rho = |\lambda_1|-\max_{j \geq 2} |\lambda_j| >0$, and the eigengap is defined by $\varphi=\min_{j\ge2}|\lambda_1-\lambda_j|>0$.
Let $v_d$ be a unit-norm eigenvector associated with $\lambda_1$, and define
\[
s(r)=\frac{|v_d(r)|}{\|v_d\|_1} \, ,
\qquad r=1,\ldots,R \, .
\]
Then, as $n\to\infty$, the following statements hold with probability tending to one:
\begin{enumerate}

\item \textbf{Eigenvector stability.}
There exists a sign $\eta\in\{-1,1\}$ such that
\[
\|\widehat v_d-\eta v_d\|_2
\le
C_{\mathrm{DK}}
\frac{\Delta_n}{\varphi},
\]
where $C_{\mathrm{DK}}$ is a universal Davis--Kahan constant.

\item \textbf{Score stability.}
\[
\sup_{1\le r\le R}
|\widehat s(r)-s(r)|
=
O_p\!\left(
\frac{\Delta_n}{\varphi}
\right).
\]

\item \textbf{Ranking stability.}
If
\[
s(r_1)-s(r_2)
>
2\sup_{1\le r\le R}
|\widehat s(r)-s(r)|,
\]
then
\[
\widehat s(r_1)>\widehat s(r_2).
\]

\item \textbf{Top-$K$ stability.}
Let $S_K$ denote the set of the $K$ regions with largest population scores, and define
\[
\Delta_K=s_{(K)}-s_{(K+1)},
\]
where
\[
s_{(1)}\ge\cdots\ge s_{(R)}
\]
are the ordered population scores. If
\[
\Delta_K
>
2\sup_{1\le r\le R}
|\widehat s(r)-s(r)|,
\]
then the estimated top-$K$ set equals $S_K$.

\end{enumerate}
\end{theorem}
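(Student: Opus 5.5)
The plan is to prove the four claims in order, each one feeding the next, working on the event where $\Delta_n$ is small relative to the spectral gaps. Since $\Delta_n=o_p(1)$ and $R$, $\rho$, $\varphi$ are fixed, the event $E_n=\{\Delta_n<\min(\rho,\varphi)/4\}$ has probability tending to one, and everything below is argued on $E_n$.

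\emph{Step 1 (eigenvector stability).} On $E_n$, Weyl's inequality gives $|\widehat\lambda_j-\lambda_j|\le\Delta_n$ for each $j$. Hence the eigenvalue of $\widehat L_d$ with largest modulus is still the perturbation of $\lambda_1$: its modulus is at least $|\lambda_1|-\Delta_n$, while every other eigenvalue has modulus at most $\max_{j\ge2}|\lambda_j|+\Delta_n$, which is strictly smaller because $\rho>2\Delta_n$. This is why $\widehat v_d$ is well defined and is the eigenvector matching $v_d$.

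We then apply the Davis--Kahan $\sin\theta$ theorem in the Yu--Wang--Samworth form. This gives $\sin\angle(\widehat v_d,v_d)\le 2\Delta_n/\varphi$. Choosing $\eta=\operatorname{sign}\langle \widehat v_d,v_d\rangle$ makes the inner product nonnegative, so $\|\widehat v_d-\eta v_d\|_2\le\sqrt2\sin\angle(\widehat v_d,v_d)$, and the claimed bound follows with $C_{\mathrm{DK}}=2\sqrt2$.

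The main obstacle is reconciling the ordering by modulus with Davis--Kahan, which needs separation in the ordinary (signed) eigenvalue sense. The constant $\varphi$ supplies exactly that separation. The role of $\rho$ is the one above: it guarantees that the top eigenvalue by modulus of $\widehat L_d$ is the perturbation of $\lambda_1$ and not of some other $\lambda_j$.

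\emph{Step 2 (score stability).} The map $v\mapsto |v|/\|v\|_1$ is invariant under $v\mapsto -v$, so we may replace $v_d$ by $\eta v_d$. Write $u=\widehat v_d$ and $w=\eta v_d$. Then
\[
|\widehat s(r)-s(r)|\le \frac{\bigl||u(r)|-|w(r)|\bigr|}{\|u\|_1}+|w(r)|\left|\frac{1}{\|u\|_1}-\frac{1}{\|w\|_1}\right|\le \frac{2\|u-w\|_1}{\|u\|_1}.
\]
For the second term we used $|w(r)|\le\|w\|_1$ and $\bigl|\|u\|_1-\|w\|_1\bigr|\le\|u-w\|_1$. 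Both vectors have unit $\ell_2$ norm, so $\|u\|_1\ge1$. Also $\|u-w\|_1\le\sqrt R\,\|u-w\|_2$. Combining with Step 1 gives $\sup_r|\widehat s(r)-s(r)|\le 2\sqrt R\,C_{\mathrm{DK}}\Delta_n/\varphi$, which is $O_p(\Delta_n/\varphi)$ because $R$ is fixed.

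\emph{Steps 3 and 4 (ranking).} Let $\epsilon=\sup_r|\widehat s(r)-s(r)|$. Then $\widehat s(r_1)\ge s(r_1)-\epsilon>s(r_2)+\epsilon\ge\widehat s(r_2)$, which proves the ranking claim.

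For the top-$K$ claim, apply this comparison to every pair with $r_1\in S_K$ and $r_2\notin S_K$. For such a pair, $s(r_1)-s(r_2)\ge\Delta_K>2\epsilon$. So every region in $S_K$ strictly outranks every region outside it under $\widehat s$. Hence the $K$ largest estimated scores are attained exactly on $S_K$. Note that $\Delta_K>0$ makes $S_K$ well defined, and the strict inequalities rule out ties at the boundary.

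Finally, $\epsilon\to0$ in probability by Step 2. So whenever $\Delta_K>0$, the hypothesis $\Delta_K>2\epsilon$ itself holds with probability tending to one.
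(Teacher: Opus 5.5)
Your proposal is correct and follows essentially the same route as the paper: work on the event where $\Delta_n$ is below a quarter of $\min(\rho,\varphi)$, use Weyl to show the top-modulus eigenvalue of $\widehat L_d$ is the perturbation of $\lambda_1$, apply Davis--Kahan with gap $\varphi$, push the bound through the $\ell_1$-normalized scores by a triangle-inequality split, and finish both ranking claims with the elementary $2\epsilon$ comparison. Your version is slightly sharper in two places: the direct bound $\|\widehat v_d-\eta v_d\|_2\le\sqrt2\sin\theta$ gives $C_{\mathrm{DK}}=2\sqrt2$ instead of the paper's $4$, and using $\|\widehat v_d\|_1\ge 1$ as the denominator removes the paper's side condition $\widehat a\ge a/2$; note only that $|\widehat\lambda_j-\lambda_j|\le\Delta_n$ must be read with $\widehat\lambda_j$ the eigenvalue matched to $\lambda_j$ under the signed ordering, since it can fail if both spectra are sorted by modulus.
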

Theorem~\ref{thm:eig-stability-ranking} provides theoretical support for using the proposed scores and top-ranked regions as stable summaries of differential connectivity structure. The implications of Theorem~~\ref{thm:eig-stability-ranking} are illustrated empirically in Section~4 through the node-perturbation simulation, where Figure~\ref{fig: score separation} shows increasing separation between relevant and non-relevant nodes as the sample size grows, together with the spectrum of the empirical contrast operator. In the real-data analysis, Figure~\ref{fig: PTSD permutation} further reports the spectrum of $\widehat L_d$, providing a diagnostic assessment of the leading eigengap and the stability of the resulting region ranking.

\section{Simulation Study}\label{sec: simulation results}

In this section, we evaluate the ability of our approach to identify brain regions whose connectivity patterns differ between two states. We assess its performance using artificial datasets designed to mimic common challenges encountered in fMRI analysis. 

Across all experiments, we compare our method with three widely used approaches in the fMRI literature: 
(i) univariate correlation (UC) testing, which evaluates differences in pairwise correlations between conditions across all pairs of regions \citep{whitfield2012conn}. 
(ii) the Network-Based Statistic (NBS), a widely used approach for detecting subnetworks exhibiting significant differences between conditions. 
(iii) Psychophysiological Interaction (PPI), in which the FC is computed based on a predefined seed region \citep{friston1997psychophysiological}. 
In our simulation setting, PPI is applied under the simplifying assumption that states $A$ and $B$ correspond to two contiguous temporal blocks. While this block structure is not inherent to the data-generating mechanism, it is imposed solely to enable a controlled and comparable evaluation. To ensure a fair comparison, the PPI seed region is always selected from the ground-truth signal regions.
Additional implementation details for the UC, NBS, and PPI methods are provided in Appendix \ref{app:Methods description}.

\subsection{Data Generation}

We considered a sample size of $n=150$ and a signal length of $T=100$. 
We generated data under two complementary classes of connectivity structures: 
(i) block-structured models, and (ii) localized perturbation models. 
The former captures changes in community structure, while the latter introduces sparse or node-level modifications to the connectivity pattern.

We evaluate detection performance using four standard metrics: \textit{precision}, \textit{recall}, the \textit{F1 score}, and the \textit{area under the precision--recall curve (PR-AUC)}. These metrics quantify how accurately each method identifies truly differentiating brain regions, both when using continuous scores and at a fixed level of statistical significance. More details are available in Appendix \ref{app:simulations details}.

\paragraph*{Block-Structured Simulations}

We generated two sets of artificial signals, one for each state, with a block-dependency structure such that signals from two regions $r$ and $r'$ are statistically dependent if they belong to the same block and independent otherwise.
The block-dependency structures are identical for the two states, except for one block in state $X$, which is subdivided into two smaller blocks in state $Y$ (see top panel in Figure \ref{fig:simulation_blocks}). 
The objective is to correctly identify the regions comprising this modified block.
\begin{figure}[tb]
    \centering
    \includegraphics[width = 0.60\linewidth]{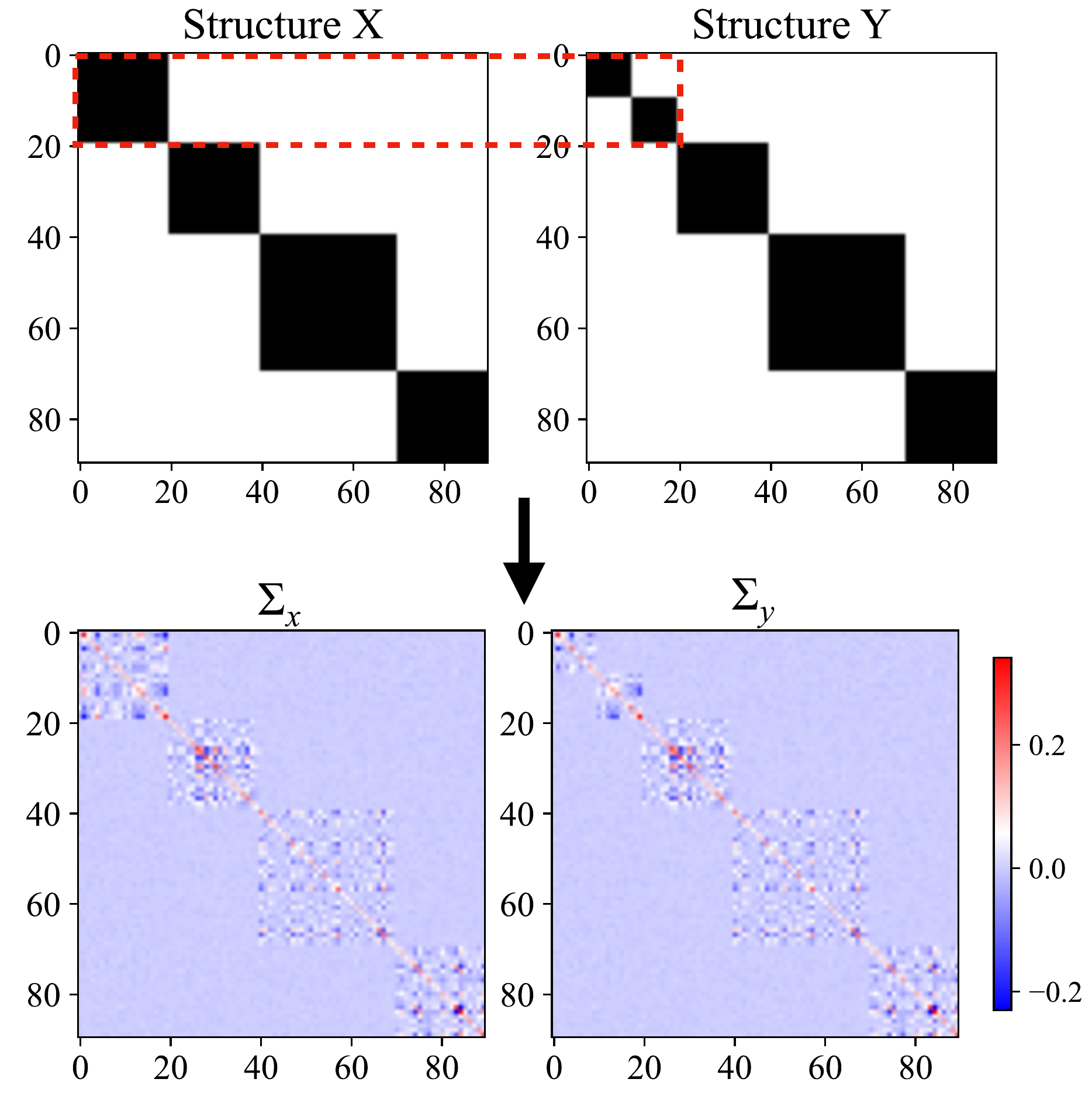}
    \caption{\textbf{Top panel:} An illustration for the block-dependency structure in states $X$ and $Y$. One block in $X$ is divided into two blocks in $Y$. \textbf{Bottom panel:} An example of the covariance matrices of the data used in the linear simulation.}
    \label{fig:simulation_blocks}
\end{figure}
We evaluated three types of statistical dependency profiles: 
\begin{enumerate}
    \item 

\textbf{Linear statistical dependency:} Each data column was independently sampled from a multivariate Gaussian distribution with zero mean and a block-diagonal covariance matrix $\Sigma_X \in \mathbb{R}^{R \times R}$. The block structure induces groups of correlated regions, while regions from different blocks remain independent. An illustration of $\Sigma_X$ and $\Sigma_Y$ is shown in the bottom panel of Fig. \ref{fig:simulation_blocks}.



\item \textbf{Nonlinear setting:} Each sample was generated from \(d\) independent latent seed signals \(S_X^{(i)}\sim\mathcal N(0,1)\). The observed regions were organized into blocks, and signals within each block were constructed as noisy nonlinear transformations of a shared seed signal, using sinusoidal functions with randomly generated frequencies and phases. This induces nonlinear dependence within blocks while maintaining independence between blocks. As in the linear setting, \(Y^{(i)}_{(\mathrm{nonlin})}\) was generated by modifying the block structure such that the last block of \(X\) was split into two blocks in \(Y\).



\item \textbf{Hybrid setting:} We generated signals that combine the linear and nonlinear dependency structure of the form
\[
X^{(i)} = \alpha  X^{(i)}_{(\text{lin})} + (1 - \alpha)  X^{(i)}_{(\text{nonlin})} + \epsilon \, ,
\]
where $\epsilon \sim \mathcal N(0,\sigma^2 I)$, and $\alpha \in [0,1]$ controls the relative contribution of the linear and nonlinear components. The matrices $X^{(i)}_{(\text{lin})}$ and $X^{(i)}_{(\text{nonlin})}$ are generated as described above. 
\end{enumerate}


\paragraph*{Localized Perturbation Simulations}
We consider two perturbation models that induce localized changes in the connectivity structure between states.

\textbf{Edge-wise perturbation:} In this setup, we modify a baseline covariance by increasing a selected subset of its entries by $\delta$. The baseline covariance is computed based on an rs-fMRI scan from the PTSD dataset. 
To ensure that the resulting covariance structure remains valid, we project it onto the positive semidefinite cone. Observations in dataset $X$ are sampled from a multivariate Gaussian distribution with zero mean and the baseline covariance, while dataset $Y$ is sampled using the perturbed covariance.
The ground truth in this case is defined based on the resulting correlation matrices.

\textbf{Node-wise perturbation:}
We examine a dynamic data-generating process based on a vector autoregressive (VAR) model, designed to simulate fMRI-like data in which brain regions demonstrate temporal dependence and network interactions.

In this framework, observations in dataset \(X\) are generated using a VAR(1) model in conjunction with a baseline covariance matrix, whereas dataset \(Y\) is produced with a perturbed covariance. We select a subset of nodes and adjust the baseline covariance matrix by strengthening connections within the subset while weakening those outside it.
The ground truth corresponds to the set of perturbed nodes.

\subsection{Results}

\paragraph*{Linear Setting:} 
\begin{figure}[tb]
    \centering
    \includegraphics[width = 0.9\linewidth]{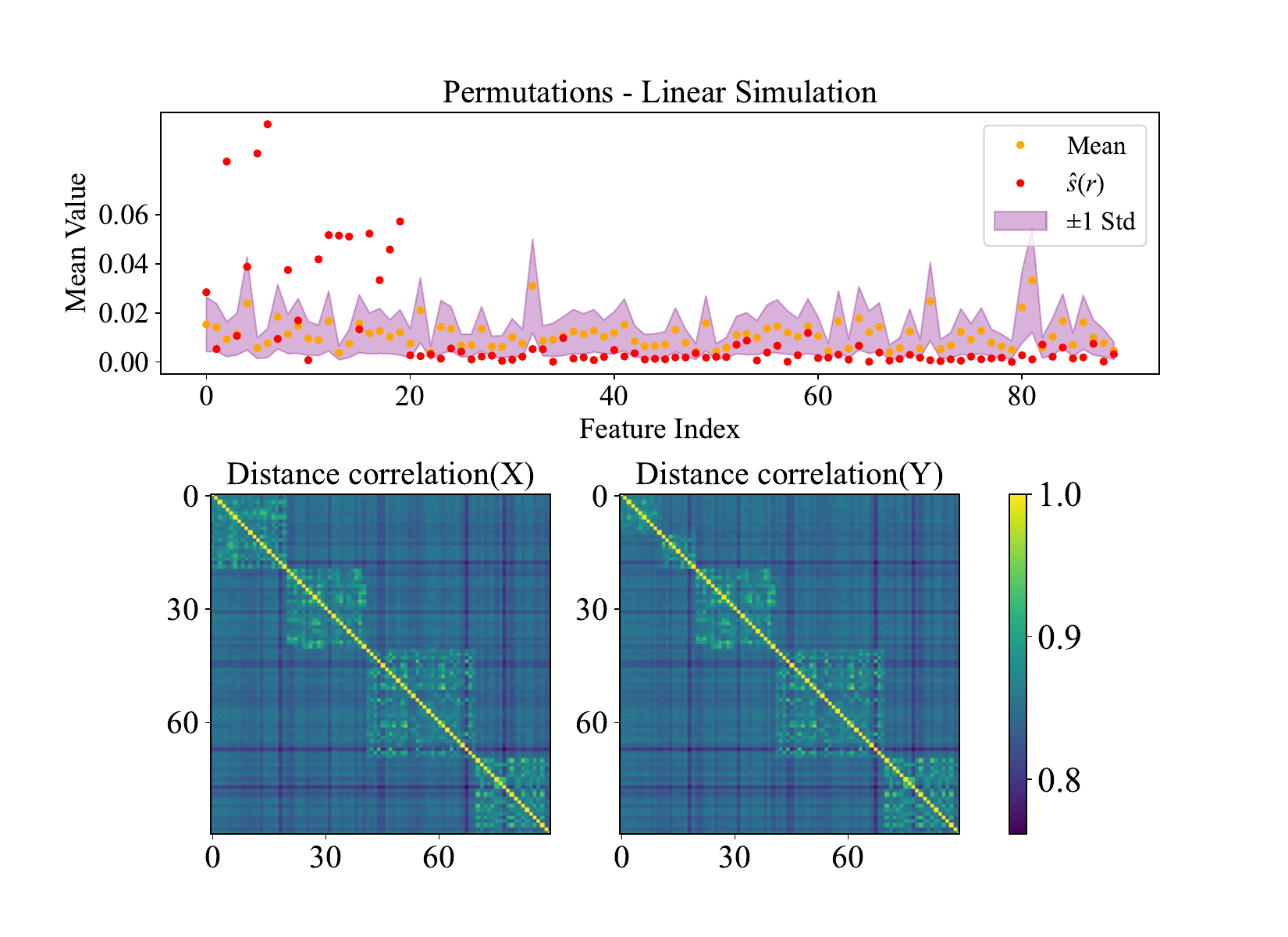}

    \caption{Linear simulation setting. \textbf{Top panel:} Observed test statistic $\widehat s(r)$ (red) compared with the mean (orange) and standard deviation (purple) of the permutation-based null distribution.
\textbf{Bottom panel:} Distance-correlation matrices for datasets $X$ and $Y$, highlighting the regions with altered connectivity in the first cluster.}
    \label{fig:linear simulation}
\end{figure}


    

Figure~\ref{fig:linear simulation} provides an illustration of our experiment's results. The upper panel shows the observed test statistic $\widehat s(r)$ in red, along with the mean (orange) and standard deviation (purple)  of the test statistics obtained from $B=1000$ random permutations. 
The bottom panel presents the distance-correlation matrices for $X$ and $Y$. As expected, the main connectivity differences are concentrated within the first block. The regions exhibiting the highest values of $\widehat s(r)$ closely align with this altered cluster, clearly emphasizing the relevant features  (indices 0--20). In contrast, the permuted statistics show no structural signal, confirming that the observed deviations are specific to the true connectivity differences.

To systematically evaluate performance, we repeated the simulation across a range of $\gamma$ values, where larger $\gamma$ induces stronger, more structured linear dependencies among the regions.
Figure~\ref{fig: f1 scores block setting} (left panel) shows the F1 score (after multiple-testing correction). We can see that NBS and UC yield very similar results and achieve the highest score across all $\gamma$ values. The results of PPI don't change with $\gamma$ and remain around 0.5. Our method shows a steeper improvement for $\gamma > 1$, but it still remains lower than NBS and UC due to low recall. Full results are provided in Appendix. \ref{app:simulations details}.

\paragraph*{Nonlinear Setting:} 

 

In the middle panel of Figure \ref{fig: f1 scores block setting}, it is evident that our method outperforms all other methods across all noise levels, denoted as $\sigma$. The observed decrease in performance is primarily due to a reduction in recall (for detailed results, refer to Appendix \ref{app:simulations details}).

\paragraph*{Hybrid Setting:} 


Figure \ref{fig: f1 scores block setting} (right panel) demonstrates that our method outperforms all baselines for low-to-moderate $\alpha$, where nonlinear effects dominate. As $\alpha$ increases and dependencies become predominantly linear, NBS and UC gradually improve and slightly exceed our results for $\alpha \geq 0.8$. PPI improves with $\alpha$, yet it remains consistently inferior, highlighting the limitations of seed-based inference in detecting distributed connectivity changes.

\begin{figure}[tb]
    \centering
    \includegraphics[width = 0.99\linewidth]{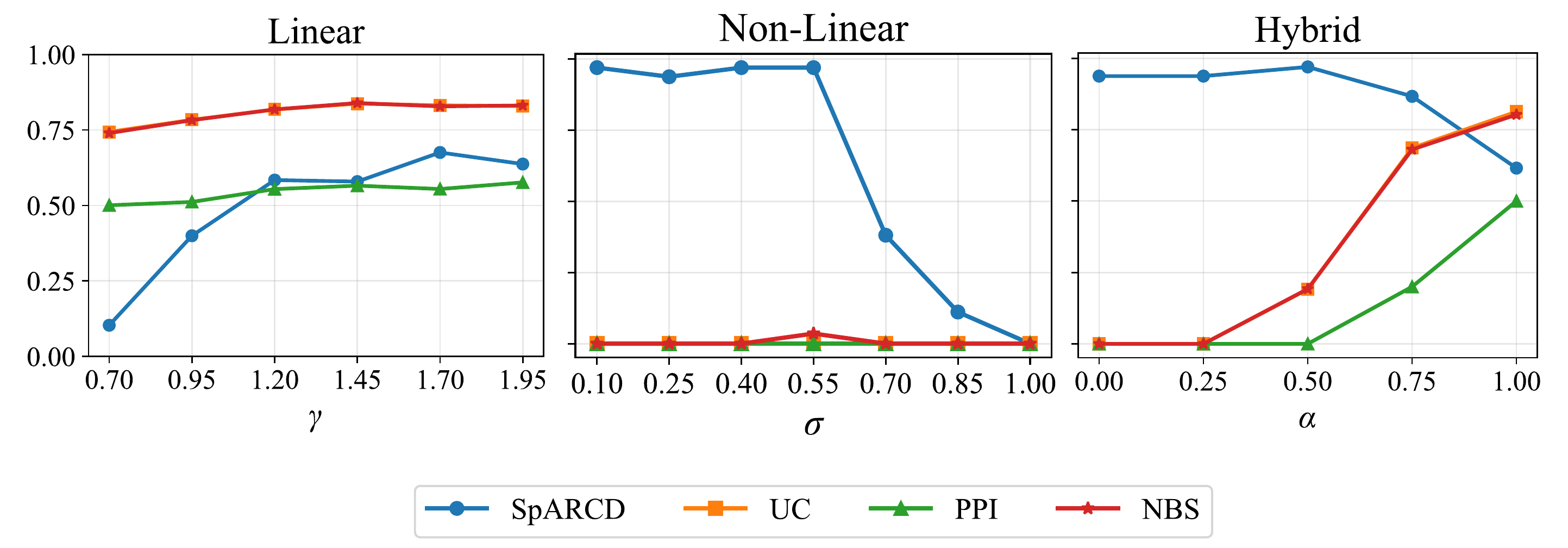}
    \caption{Performance of SpARCD in terms of F1 score and competing methods in the block-structure simulations. The results of the linear block setting against the parameter $\gamma$ (left panel), of the nonlinear block setting against the noise parameter $\sigma$ (middle panel), and of the hybrid block setting against the mixture parameter $\alpha$ (right panel).}
    \label{fig: f1 scores block setting}
\end{figure}


\paragraph{Edge-Wise Perturbation:}

Figure \ref{fig: f1 scores perturbation setting} shows that  NBS, followed by UC, performs best under this setting, achieving high scores as expected. Our method yields a medium score due to low recall, while PPI's performance is very low. More detailed results are presented in Appendix \ref{app:simulations details}.

\begin{figure}[tb]
    \centering
    \includegraphics[width = 0.8\linewidth]{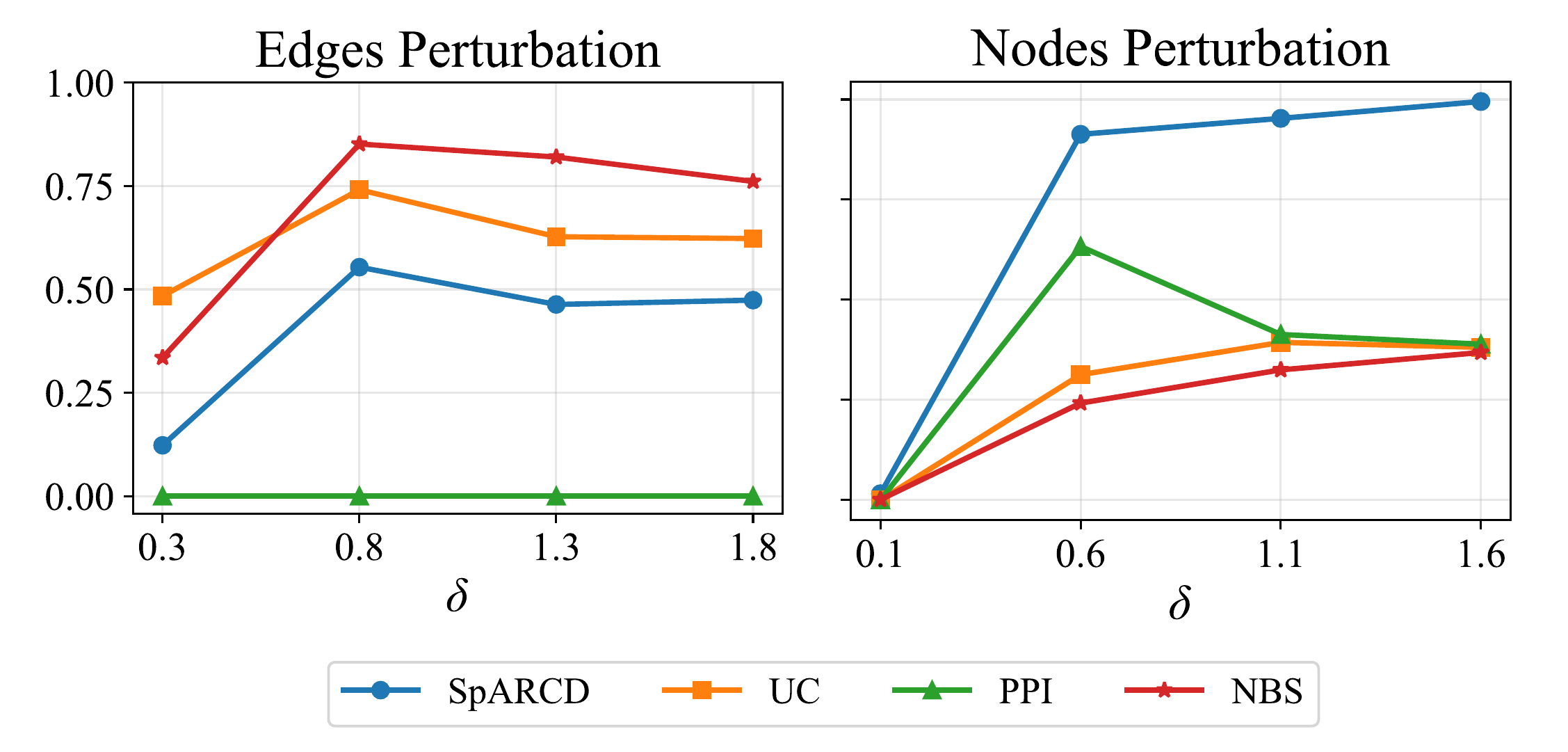}
    \caption{Performance of SpARCD in terms of F1 score and competing methods in the perturbation simulations. The results of the edges perturbation (left panel), and of the node-wise perturbations against the parameter $\delta$ (right panel).}
    \label{fig: f1 scores perturbation setting}
\end{figure}



\paragraph{Node-Wise Perturbation:}

Figure \ref{fig: f1 scores perturbation setting} demonstrates that our method performs best under this setting. UC, NBS, and PPI yield similar F1 scores for all $\delta$, much lower than our method.  It is worth noting that the recall and precision of UC and NBS vary from those of the PPI results. For a more detailed overview, refer to Appendix \ref{app:simulations details}.

Figure \ref{fig: score separation} illustrates the behavior of the proposed score as a function of the sample size $n$. For each $n$, we report the average score assigned to relevant (ground-truth) and non-relevant nodes across repeated simulations, along with standard deviation error bars. As $n$ increases, the score assigned to relevant nodes increases, while the score assigned to non-relevant nodes remains low, resulting in a clear and growing separation between the two groups. This result supports the consistency of the proposed method, demonstrating that it increasingly concentrates on the true relevant nodes as more data becomes available.


\begin{figure}[tb]
    \centering
    \includegraphics[width = 0.45\linewidth]{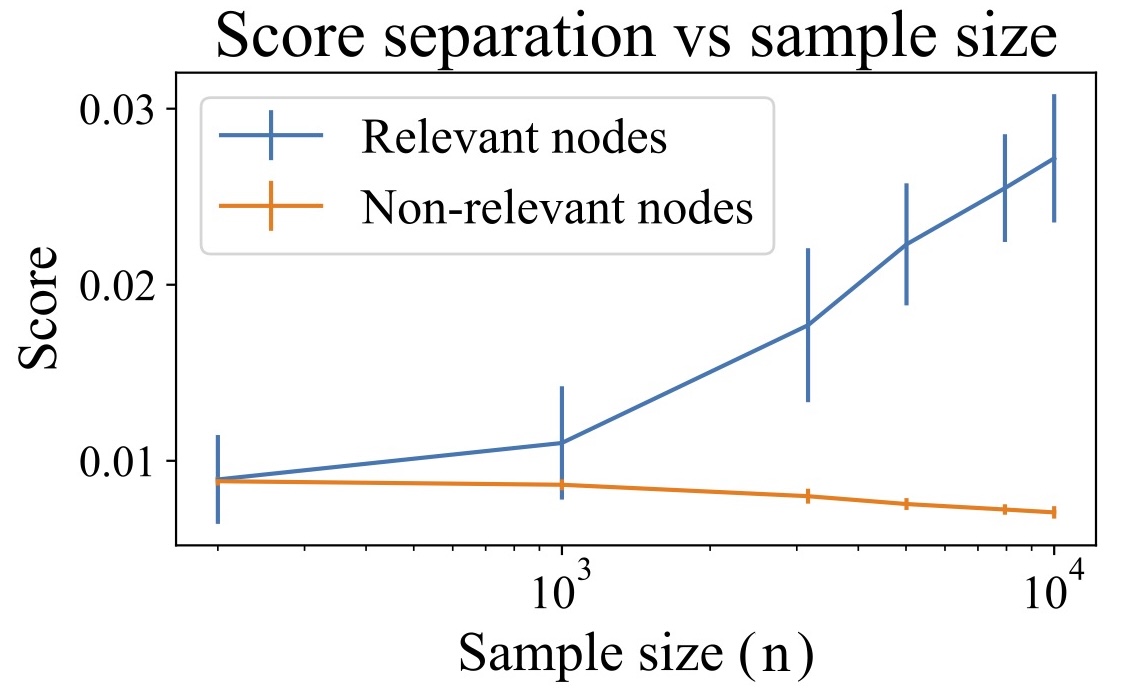}
    \includegraphics[width = 0.44\linewidth]{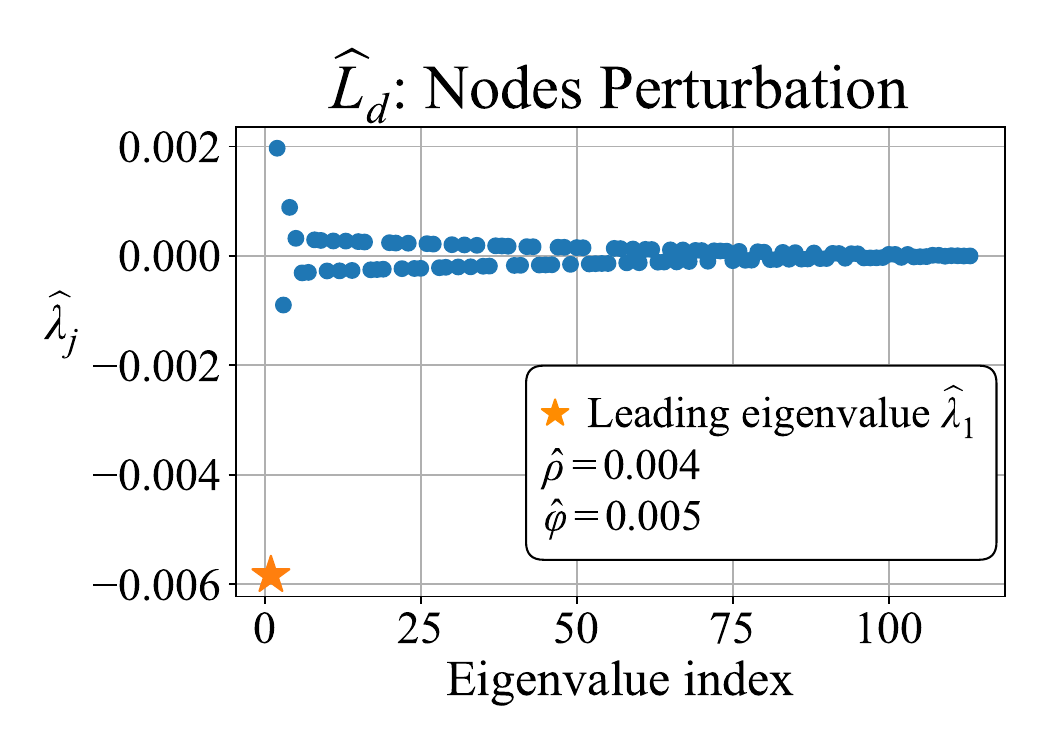}
    \caption{Empirical support for score stability in the node-perturbation setting. \textbf{Left panel:} Mean score assigned to relevant ground-truth nodes and non-relevant nodes as a function of the sample size $n$ on a logarithmic scale, averaged over 200 independent simulation runs. Error bars denote one standard deviation across repetitions. As $n$ increases, the proposed method assigns increasing mass to the relevant nodes while keeping the scores of non-relevant nodes low, yielding a clear separation between the two groups. These results provide empirical support for the consistency of the proposed estimator.
\textbf{Right panel:} Spectrum of $\widehat L_d$. The leading eigenvalue, $\lambda_1$, is highlighted by an orange star.}
    \label{fig: score separation}
\end{figure}

\section{EFMT and rs-fMRI Experiments}
We applied the SpARCD framework, together with competing methods, to two fMRI datasets from the cohort described in \cite{ben2019neurobehavioral}: task fMRI collected during the EFMT, following the Hariri protocol \citep{hariri2002amygdala}, and rs-fMRI. Specifically, we conducted four experiments:
\begin{enumerate}
    \item Comparing emotional vs. neutral conditions by analyzing task fMRI of PTSD subjects (Sec. \ref{sec: application to hariri}). 
    \item Comparing PTSD patients to the control group by analyzing rs-fMRI scans (Sec. \ref{sec: application to rfMRI}).
    \item Comparing emotional vs. neutral conditions by analyzing task fMRI of control group subjects (Appendix. \ref{subsec: addition efmt - non ptsd}).
    \item Comparing rs-fMRI to EFMT in PTSD subjects (App. \ref{subsec: addition efmt - rfmri vs efmt}).
\end{enumerate}
We summarize all the results of all four real-data experiments in Table \ref{tab:real_data_summary}.
 Table \ref{tab:harvard_roi_mapping} in the Appendix provides a mapping of regions to indices.

Across the four real-data experiments, SpARCD consistently identified regions associated with visual and sensory processing networks, suggesting stable condition-specific alterations in connectivity involving sensory systems. In particular, repeated detections were observed in occipital and posterior cortical regions, including the intracalcarine cortex (44–47), cuneal cortex (62–63), lingual gyrus (70–71), occipital fusiform regions (76–79), and posterior visual areas (92–95). 
Notably, substantial overlap was observed between the emotional vs. neutral comparison in experiments involving PTSD (experiment 1) and the control group (experiment 3). This indicates that similar sensory networks are implicated in the EFMT experiments in these two populations.  

\begin{center}
\begin{table*}[htb]
\centering

\renewcommand{\arraystretch}{1.2}
\begin{tabular} {p{5.5cm} p{6.5cm} p{3cm}} 
\hline
\textbf{Comparison} & \textbf{Detected Regions} & \textbf{\# Regions} \\
\hline

Emotional vs Neutral (PTSD)
& 44,45,62,63,76,77,78,79,94,95
& 10 \\

Emotional vs Neutral (control)
& 44,45,76, 77, 78, 79, 94, 95
& 8 \\

PTSD vs Control (r-state)
& 46,47,62,63,70,71,93,94
& 8 \\

EFMT vs r-state (PTSD)
& 62, 63, 77, 78, 79, 92, 93, 94, 95
& 9 \\

\hline
\end{tabular}
\caption{Summary of detected condition-specific regions across the four real-data experiments. 
Only regions surviving multiple-comparison correction are reported.}
\label{tab:real_data_summary}

\end{table*} 
\end{center}

\subsection{Paired Task-Based Application: Emotional versus Neutral Conditions in the Hariri EFMT Dataset} \label{sec: application to hariri}

We construct our dataset by segmenting each subject's BOLD time series into two groups of intervals. One group corresponds to the emotional-stimulus blocks, while the other group corresponds to the neutral-stimulus blocks. This results in two paired sets of fMRI recordings, $X, Y \in \R^{n \times R \times T}$. Here, $n=113$ represents the participants diagnosed with PTSD at baseline, $R=113$ indicates the number of ROIs, and $T=68$ denotes the time samples under each task type.


\begin{figure}[!htb]
\centering
\includegraphics[width=0.44\linewidth]{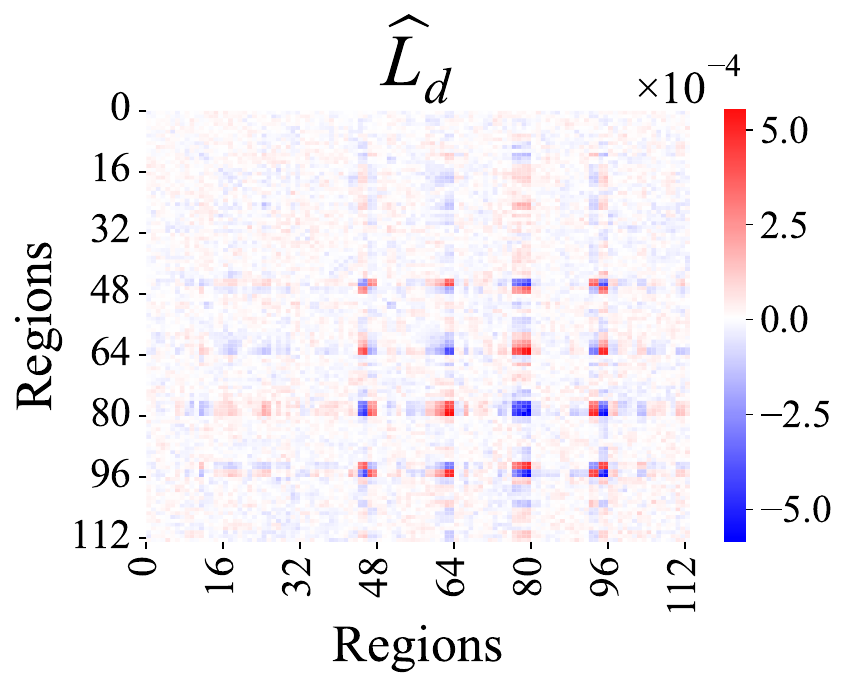}
\includegraphics[width=0.55\linewidth]{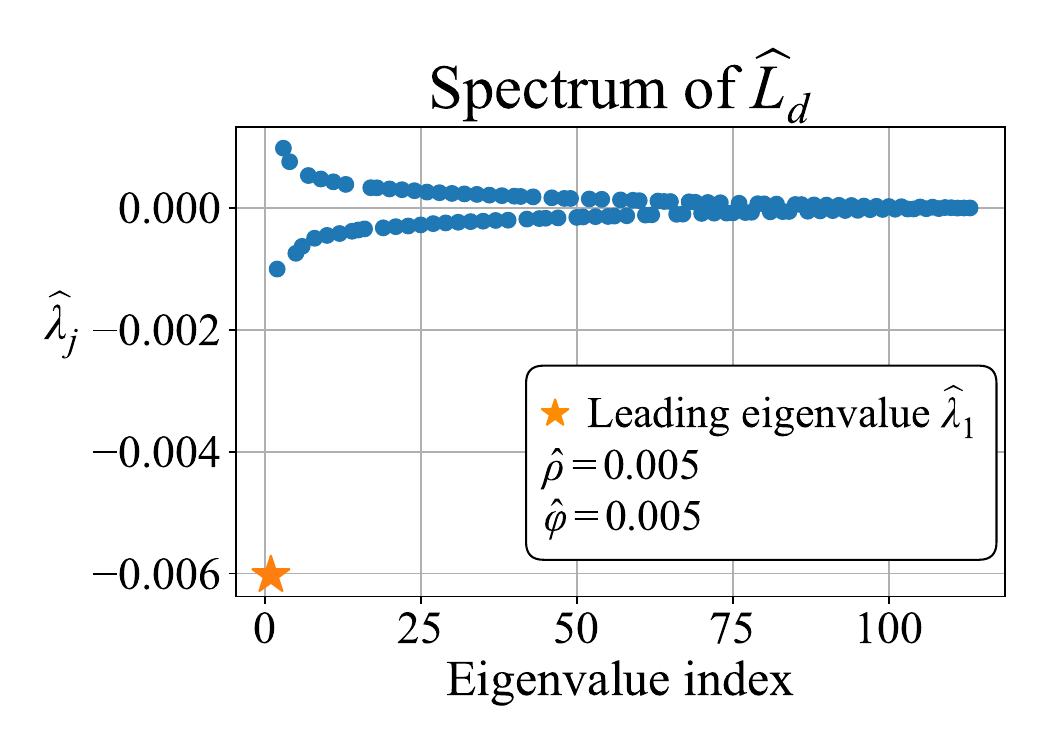}
\includegraphics[width=0.95\linewidth]{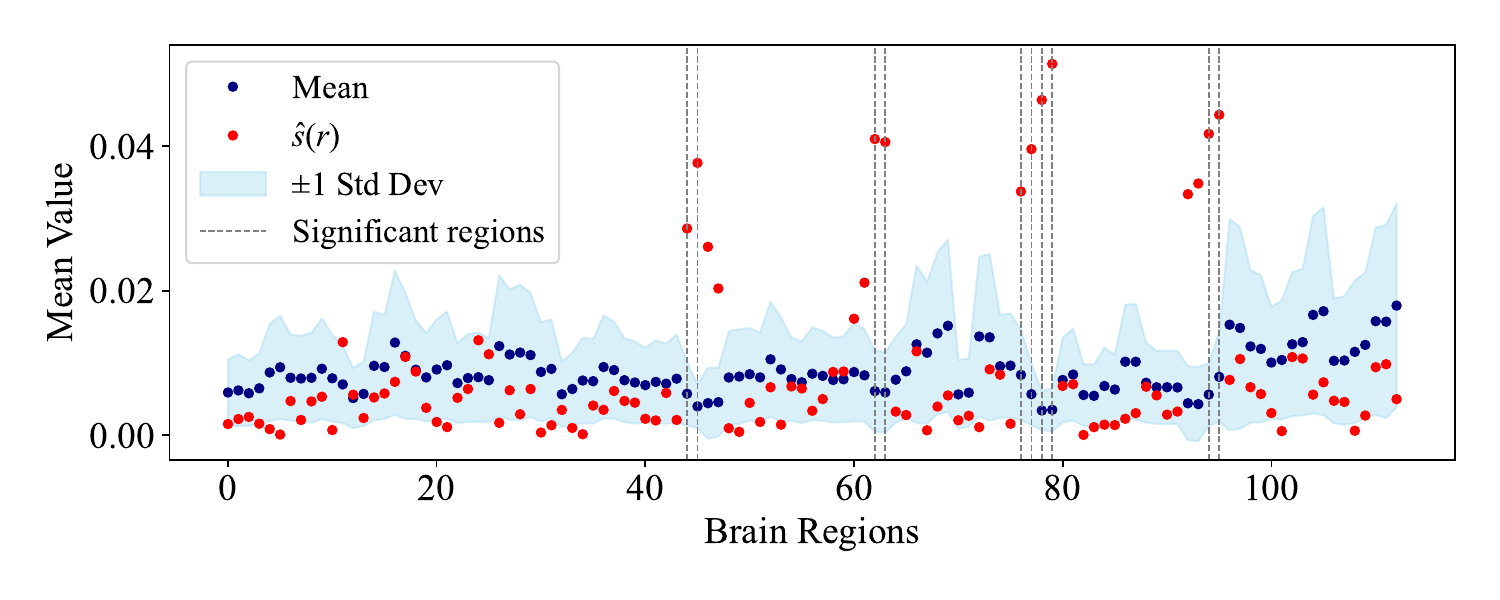}
\caption{Results of SpARCD applied to the Hariri fMRI dataset.
\textbf{Top left:} The empirical differential operator, $\widehat L_d$, on the Hariri fMRI dataset. {\bf Top right:} Spectrum of the empirical differential operator $\widehat L_d$. The orange star marks the leading eigenvalue, and the displayed eigengap supports the stability of the eigenvector-based region ranking.
\textbf{Bottom:} The observed test statistic $\widehat s(r)$ (red) is shown together with the mean (blue) and standard deviation (light blue) of the permutation-based null distribution. Several regions exhibit clear deviations above the null expectation, indicating significant differences in functional connectivity between the emotional and neutral task conditions.}
\label{fig: PTSD permutation}
\end{figure}

\begin{figure}[!htb]
\centering
\includegraphics[width=0.99\linewidth]{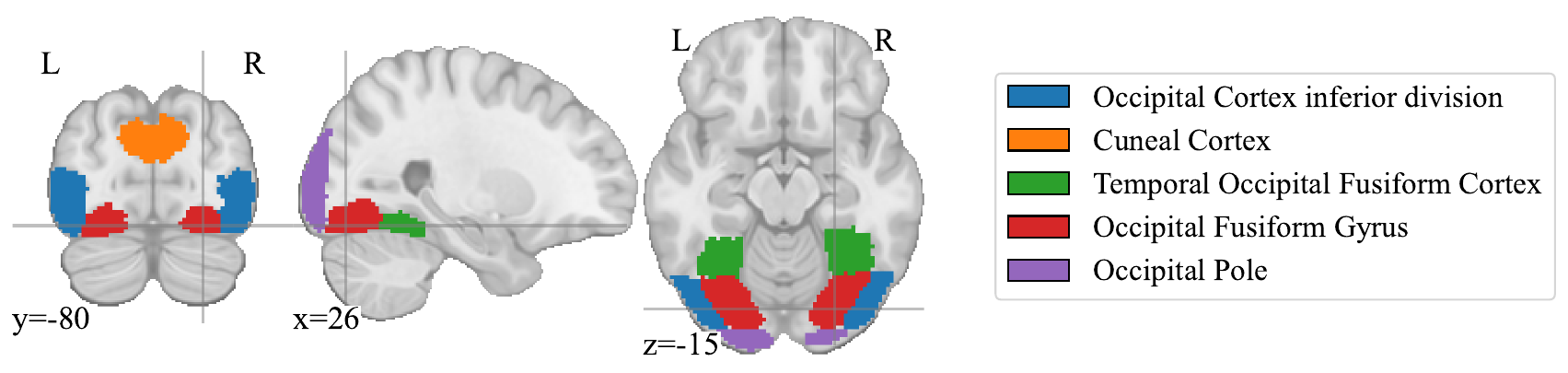}
\caption{Results of SpARCD applied to the Hariri fMRI dataset.
Anatomical locations of the identified significant regions are displayed on representative sagittal, coronal, and axial slices
at the indicated MNI coordinates. Different colors are used solely to distinguish separate
significant ROI. ”L” and ”R” denote the left and right hemispheres, respectively.}
\label{fig: hariri regions}
\end{figure}

After multiple-testing correction, SpARCD identified 10 significantly differentiating regions that primarily correspond to the bilateral lateral occipital cortex (44–45), cuneal cortex (62–63), temporal occipital fusiform cortex (76–77), occipital fusiform gyrus (78–79), and occipital pole (94–95) — areas within the posterior visual network, as can be seen in Figure~\ref{fig: hariri regions}. 
This constellation of posterior visual regions is consistent with prior studies implicating the visual association network in emotion perception and threat-related visual processing \citep{eklund2016cluster,barch2013function}.

Figure \ref{fig: PTSD permutation} (bottom) displays the empirical distribution of the test statistic $\widehat s(r)$ across regions. The red curve represents $\widehat s(r)$, while the blue curve and light-blue band are the mean and standard deviation obtained by $B=2000$ random permutations. Several regions exhibit marked deviations from the null expectation, indicating state-dependent alterations in FC. Figure~\ref{fig: PTSD permutation} (top right) displays the spectrum of the empirical contrast operator 
\(\widehat L_d\). The eigenvalue with the largest absolute value is negative, indicating that the dominant mode of differential connectivity is oriented in the negative direction of the contrast \(\widehat L_Y^{\mathrm{filt}}-\widehat L_X^{\mathrm{filt}}\). This sign has no 
effect on the region-level scores, which depend on the absolute entries of the 
corresponding eigenvector. More relevant for interpretation is the separation of this leading eigenvalue, in absolute value, from the remainder of the spectrum. This separation provides an empirical diagnostic, motivated by Theorem~\ref{thm:eig-stability-ranking}, that the 
leading eigenvector and the induced ranking of brain regions are not sensitive to near-ties among leading spectral components. The empirical eigengap $\widehat\varphi = 0.005$ is not small relative to the leading eigenvalue, suggesting that the leading spectral component is not the result of an apparent near-tie among several eigenvalues.


To characterize the specific brain networks with which these regions exhibit altered connectivity, one can analyze the operator $\widehat L_d$, which provides a descriptive summary of these changes, as demonstrated in Figure~\ref{fig: PTSD permutation} top left panel. In ongoing work, we plan to develop hierarchical FDR procedures to assign statistical significance to these network-level interpretations.

For comparison, we repeated the analysis using PPI, UC testing, and NBS. 
PPI results (Table \ref{tab: ppi} in Appendix \ref{app:Methods description}) depend strongly on the selected seed region: when the right amygdala (region 108) was used as the seed, 46 significant connections were identified; using the left amygdala (region 109) yielded only 34, with limited overlap beyond the limbic regions.
This variability underscores the seed-selection bias inherent in PPI and its reduced ability to detect distributed changes. 
The UC approach (Figure \ref{fig: UC NBS fMRI} left panel, and Table \ref{tab:sig_roi_connections} in Appendix \ref{app:Methods description}) produced a dense pattern of significant edges, 
mostly connected to regions overlapping with SpARCD’s findings.
NBS (Figure \ref{fig: UC NBS fMRI}, right panel in Appendix \ref{app:Methods description}) identified a single significant connected component ($p_{\mathrm{NBS}} < 1 \times 10^{-4}$), indicating the presence of a distributed connectivity alteration at the subnetwork level. However, the detected component was substantially broader and less spatially localized compared to the regions identified by SpARCD. 



\paragraph*{Sensitivity analysis for the parameter $K$}

To further assess the robustness of the data-driven selection rule in Sec. \ref{app:Data_driven K}, we conducted a sensitivity analysis for two experiments: (i) the emotional vs. neutral in EFMT, and (ii) Simulated linear-block data. In the EFMT experiments,  the data-driven choice was $K^\ast = 4$, and tended to yield a more comprehensive set of detected regions compared to fixed choices of $K$, as shown in Figure \ref{fig:sensitivity_analysis} (right panel). In the simulated data, for $K \geq 5$, the results were quite similar across different values of $K$, as shown in Figure \ref{fig:sensitivity_analysis} (left panel). For $K^\ast$, we got an average precision of 1, a recall of 0.43, and an F1 score of 0.59. 
Overall, the proposed criterion provides a balanced and stable choice of $K$, achieving competitive performance while avoiding manual tuning.



\begin{figure}[tb]
\centering

\begin{minipage}{0.55\textwidth}
    \centering
    \includegraphics[width=\linewidth]{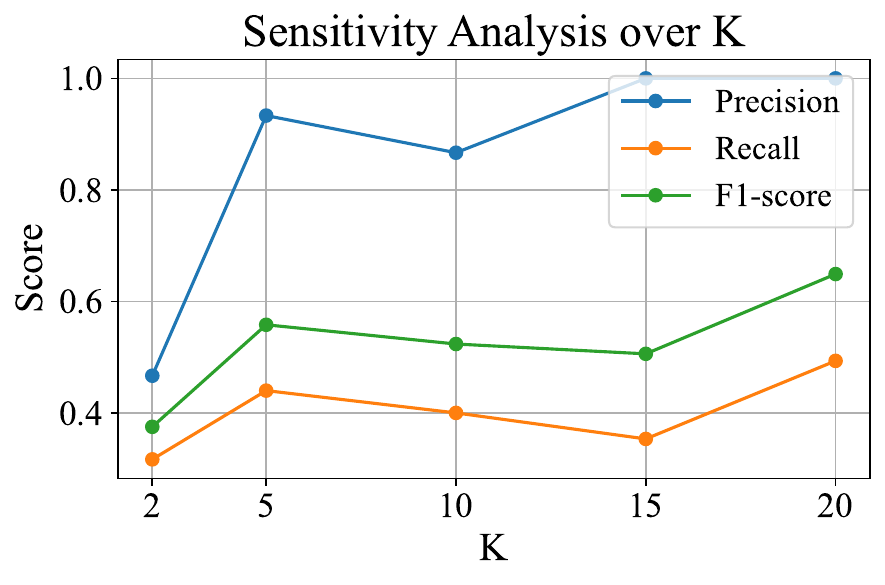}
\end{minipage}
\hfill
\begin{minipage}{0.4\textwidth}
    \centering
    \centering
\begin{tabular}{rlr}
\toprule
K & Detected Nodes & Count \\
\midrule
4 & 44, 45, 62, 63, 76, 77 & 10 \\
& 78, 79, 94, 95  \\
10 & 62, 63, 77, 78, 79 & 5 \\
15 & 50, 62, 63, 78, 79 & 5 \\
20 & 12, 37, 50, 62, 76, 78 & 6 \\
50 & 42, 62, 63 & 3 \\
\bottomrule
\end{tabular}
\end{minipage}

\caption{Sensitivity analysis for the parameter $K$ on the node-perturbation setting (left) and the EFMT experiment (right).}
\label{fig:sensitivity_analysis}

\end{figure}



\subsection{Unpaired Resting-State Application: PTSD versus Control Participants} \label{sec: application to rfMRI}

We applied SpARCD to resting-state fMRI data. As described in Section~\ref{sec: problem setting}, the dataset comprises $113$ participants diagnosed with PTSD and $42$ control subjects. For each participant, the scan consists of $T = 300$ time points, across $R = 113$ regions of interest (ROIs). 
In this experiment, our input consists of two \textit{unpaired} datasets $X \in \mathbb{R}^{113 \times 113 \times 300}$ and $Y \in \mathbb{R}^{42 \times 113 \times 300}$, corresponding to the PTSD and control groups, respectively.

As shown in Figure~\ref{fig: rfMRI PTSD vs Non}, multiple regions demonstrate clear deviations above the null expectation, suggesting altered connectivity patterns in the PTSD group. After multiple-testing correction, SpARCD identified eight regions with significant group-level differences, 
corresponding to the bilateral intracalcarine cortex (46–47), bilateral cuneal cortex (62–63), bilateral lingual gyrus (70-71), right supracalcarine cortex (93), and left occipital pole (94). The regions are illustrated in the bottom panel of Figure \ref{fig: rfMRI PTSD vs Non}. In this experiment, the eigengap was $\hat \varphi = 0.0079$. 

Similarly to the previous experiment, the red curve in Fig. ~\ref{fig: rfMRI PTSD vs Non} (top panel) represents the test statistic $\widehat s(r)$, while the blue curve and light-blue band are the mean and standard deviation obtained by $B=2000$ random permutations. 
In contrast to the paired EFMT analysis of Section 5, the resting-state comparison involves two independent groups. Accordingly, the permutation distribution was generated by randomly permuting subject-level group labels between the PTSD and control groups, preserving each participant's complete multivariate time series.

For comparison, we applied the UC and NBS baselines to the same dataset. After multiple-testing correction, the UC approach did not identify any significant edges. NBS detected four connected components; however, none survived component-level permutation correction at the $0.05$ significance level. PPI was not considered in this setting, since the present experiment involves unpaired resting-state data without an explicit task design or within-subject condition structure.

The fact that SpARCD detected significant effects while neither univariate testing nor NBS did suggests that PTSD-related connectivity differences are distributed across a coherent network of visual regions rather than concentrated in a few strongly altered edges. This finding is consistent with prior evidence implicating visual-processing areas in hypervigilance and threat monitoring in PTSD \citep{hendler2003sensing,korgaonkar2020intrinsic}. The spatially coherent pattern of the findings supports the ability of SpARCD to detect coordinated network-level reorganization.
These results should nevertheless be interpreted with caution due to the imbalance between the PTSD and control group sizes, which may affect the stability of the estimated connectivity structure. Additional validation on more balanced cohorts would therefore be valuable.



\begin{figure}[tb]

\centering
\includegraphics[width=0.9\linewidth]{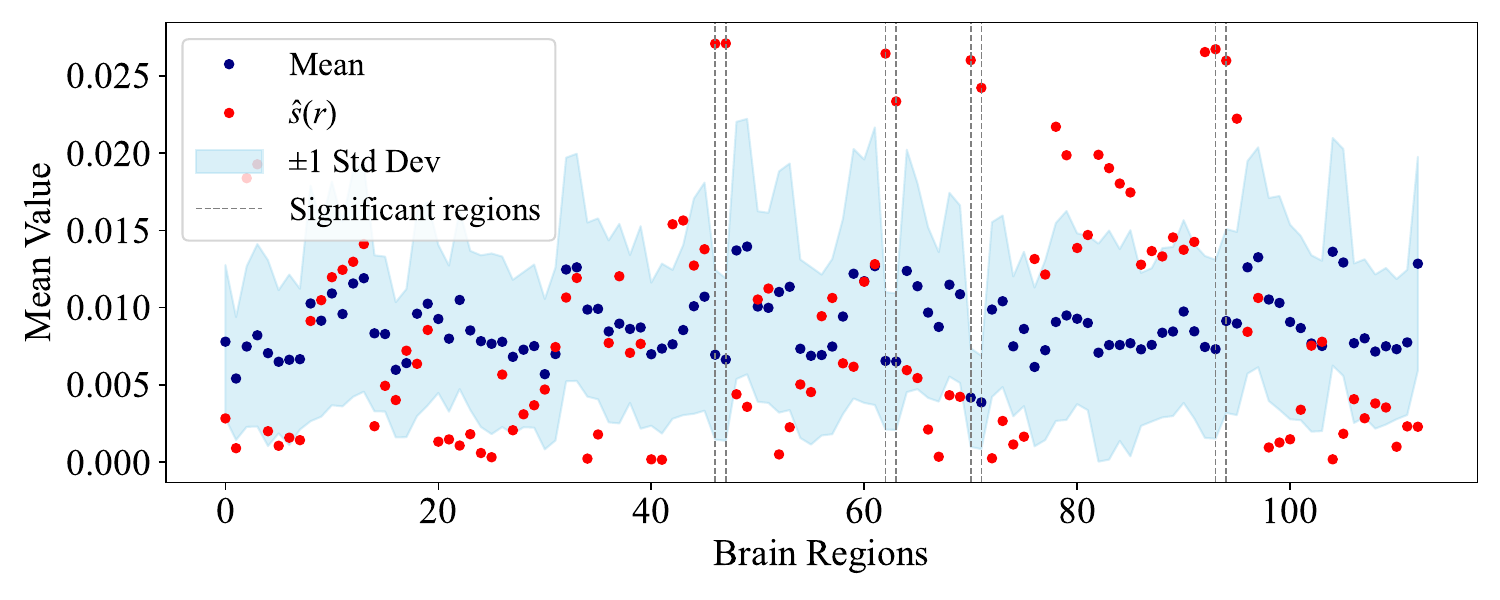}

\includegraphics[width=0.85\linewidth]{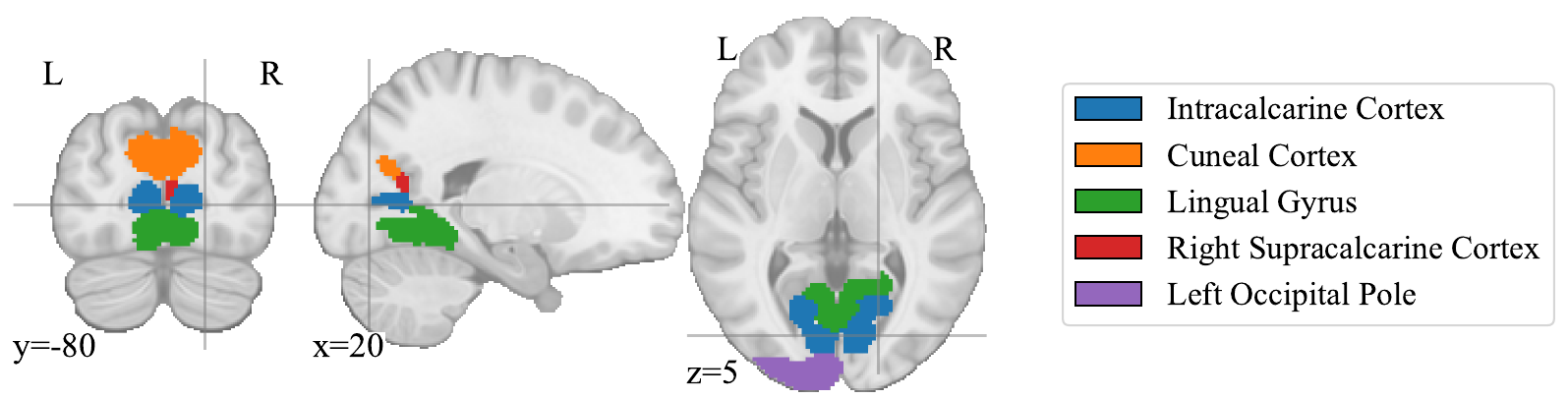}
\caption{Results of SpARCD applied to the resting-state fMRI dataset.
\textbf{Top panel:}
The observed test statistic $\widehat s(r)$ (red) is shown together with the mean (blue) and standard deviation (light blue) of the permutation-based null distribution. Several regions exhibit clear deviations above the null expectation, indicating significant differences in functional connectivity between PTSD subjects and controls.
\textbf{Bottom panel:} 
Anatomical locations of these significant regions are displayed on representative sagittal, coronal, and axial slices at the indicated MNI coordinates. Different colors are used solely to distinguish separate significant ROI. "L" and "R" denote the left and right hemispheres, respectively.}
\label{fig: rfMRI PTSD vs Non}
\end{figure}

\section{Discussion}

This work introduced SpARCD, a distance-correlation–based spectral framework for detecting differential functional connectivity between experimental conditions. By combining a nonlinear dependence measure with spectral graph filtering, SpARCD isolates structured, condition-specific changes in connectivity while suppressing shared baseline patterns. Through both simulations and empirical analysis of fMRI data, the method demonstrated high sensitivity and specificity in identifying distributed network alterations in paired and unpaired settings. The analysis was performed on EFMT (Hariri) and resting-state scans from PTSD and control patients, illustrating SpARCD's ability to address different types of fMRI scans. 


A central advantage of SpARCD lies in its ability to move beyond edge-wise testing toward network-level inference. Unlike traditional methods such as PPI or UC analysis, which test individual connections or seed-based interactions, SpARCD captures coherent, multi-regional connectivity shifts. The spectral projection step filters out components common to both conditions, thereby revealing only those eigenmodes that differ meaningfully between states. This mechanism yields a compact, interpretable subset of regions and mitigates false positives arising from widespread but nonspecific fluctuations.

In simulations, SpARCD consistently outperformed baseline methods across nonlinear and hybrid block-structured models and in node-wise perturbations. The method was particularly effective under nonlinear dependence, where classical correlation-based metrics lose power. These results highlight the value of distance correlation as a robust estimator of FC in complex neural systems.

When applied to the Hariri fMRI dataset, SpARCD uncovered a constellation of regions, predominantly within posterior visual areas, known to interact with limbic regions to support the processing of threat-related and emotional information. 
In the resting-state PTSD–control comparison, SpARCD identified posterior visual regions, including the intracalcarine and cuneal cortices, consistent with prior evidence of sensory-network dysregulation in PTSD. Even at rest, PTSD is characterized by atypical intrinsic coupling within early visual circuits, highlighting both the algorithm’s sensitivity and the functional relevance of these effects. Importantly, SpARCD detected these alterations through distributed connectivity changes rather than univariate activation differences, demonstrating its capacity to reveal subtle network-level dysregulation.

While the current study focused on binary condition comparisons, the framework naturally extends to multi-state or longitudinal designs, and can incorporate subject-level covariates through hierarchical modeling of spectral scores. Mathematically, the difference operator, or graph Laplacian, used to define $\widehat s(r)$ can be extended to model multiple covariance structures or repeated measurements over time. Computational scalability is ensured by efficient Laplacian decomposition and the use of parallelized distance-correlation estimation. Future work may integrate SpARCD with causal or directed connectivity models, enabling inference on information flow rather than undirected association.

In summary, SpARCD provides a statistically principled and computationally tractable tool for studying alterations in connectivity in high-dimensional neuroimaging data. Its general formulation makes it broadly applicable to other domains involving structured, multivariate dependence—such as genomics, climate networks, or financial systems, where uncovering subtle, distributed differences between complex systems is of primary interest.

\subsection*{Data Statements}

\paragraph{Ethics:}

This work is a secondary analysis of data previously collected and reported by \cite{ben2019neurobehavioral}. The original study was approved by the Ethics Committee of Tel Aviv Sourasky Medical Center (Reference No. 0207/14), and all participants provided written informed consent in accordance with the Declaration of Helsinki. The study was registered at ClinicalTrials.gov (Identifier: NCT03756545).

\paragraph{Data Availability:}

The data analyzed in this study will be available after QA and, once brought to maturity, by contacting the original study investigators on reasonable request. Code implementing the SpARCD method and reproducing the analyses is available at \url{https://github.com/ShiraAlon/SpARCD}.

\appendix

\section{Proofs for Theorems 1 and 2}
\begin{proof}[Proof of Theorem 1]
Condition on the permutation orbit \(\mathcal O\). Under the block-exchangeability
null, the observed labeling is uniformly distributed over the allowable relabelings
in \(\mathcal G\). Therefore, the observed statistic \(\widehat s(r)\) is equally likely to be
any element of the multiset $\{\widehat s^{(\pi)}(r):\pi\in\mathcal G\}$. Equivalently, the rank of the observed statistic among the permuted statistics is
uniform over the possible ranks, up to ties. The p-value
\[
p^*_r
=
|\mathcal G|^{-1}
\sum_{\pi\in\mathcal G}
\mathbf I\{\widehat s^{(\pi)}(r) \ge \widehat s(r)\}
\]
is the upper-tail randomization p-value associated with this conditional
permutation distribution. Hence, conditional on \(\mathcal O\),
\[
{\Pr}_{H_0}\{p^*_r\le \alpha\mid \mathcal O\}\le \alpha.
\]
Taking expectation over \(\mathcal O\) gives
\[
{\Pr}_{H_0}\{p^*_r\le \alpha\}\le \alpha.
\]
The same argument applies to the Monte Carlo version with the usual
\((1+\#)/(B+1)\) correction by exchangeability of the observed statistic and the
\(B\) sampled permutation statistics.
\end{proof}


\begin{proof}[Proof of Theorem 2]
We work on the event
$\mathcal E_n
=
\left\{
\Delta_n\le \min\left(\frac{\rho}{4},\frac{\varphi}{4}\right)
\right\}$.
Since \(\Delta_n=o_p(1)\) and \(\rho,\varphi>0\), we have
$\Pr(\mathcal E_n)\to1$.
By Weyl's inequality, every empirical eigenvalue \(\mu\in\sigma(\widehat L_d)\) is within \(\Delta_n\) of at least one population eigenvalue,
\[
\max_{\mu\in\sigma(\widehat L_d)}
\operatorname{dist}\{\mu,\sigma(L_d)\}
\le
\|\widehat L_d-L_d\|_2
=
\Delta_n .
\]
Since \(\lambda_1\) is separated from the rest of the spectrum by
$\varphi$,
and \(\Delta_n\le \varphi/4\), there is exactly one empirical eigenvalue in the
neighborhood
\[
(\lambda_1-\varphi/2 \, , \, \lambda_1+\varphi/2).
\]
Denote this eigenvalue by \(\widehat\lambda^\star\). Then
\[
|\widehat\lambda^\star-\lambda_1|\le \Delta_n.
\]
Every other empirical eigenvalue \(\mu\neq \widehat\lambda^\star\) cannot be
within \(\Delta_n\) of \(\lambda_1\), because otherwise it would also belong to
the isolated neighborhood of \(\lambda_1\). Hence, for every such \(\mu\), there
exists \(j\ge2\) such that
\[
|\mu-\lambda_j|\le \Delta_n.
\]
Consequently,
\[
|\mu|
\le
|\lambda_j|+\Delta_n
\le
\max_{j\ge2}|\lambda_j|+\Delta_n.
\]
On the event \(\Delta_n\le \rho/4\), 
we have
\[
|\widehat\lambda^\star|
\ge
|\lambda_1|-\Delta_n
>
\max_{j\ge2}|\lambda_j|+\Delta_n
\ge
|\mu|
\]
for every other empirical eigenvalue \(\mu\). Therefore the empirical eigenvalue
with largest absolute value is \(\widehat\lambda^\star\), and hence
\[
|\widehat\lambda_1-\lambda_1|\le \Delta_n.
\]
Next, let
$
P=v_dv_d^\top
$
be the rank-one orthogonal projector onto \(\operatorname{span}(v_d)\), and let
$
\widehat P=\widehat v_d\widehat v_d^\top
$
be the rank-one projector onto \(\operatorname{span}(\widehat v_d)\). Since
\(\lambda_1\) is separated from the rest of the spectrum by
$\varphi$ and since \(\Delta_n\le \varphi/4\) on \(\mathcal E_n\), the target eigenspace remains
isolated under the perturbation. The Davis--Kahan theorem gives
\[
\|\widehat P-P\|_2
\le
2\frac{\|\widehat L_d-L_d\|_2}{\varphi}
=
2\frac{\Delta_n}{\varphi}.
\]
For one-dimensional eigenspaces, there exists a sign \(\eta\in\{-1,1\}\) such that
\[
\|\widehat v_d-\eta v_d\|_2
\le
2\|\widehat P-P\|_2.
\]
Therefore,
\[
\|\widehat v_d-\eta v_d\|_2
\le
4\frac{\Delta_n}{\varphi}.
\]
Thus the eigenvector stability bound holds with \(C_{\mathrm{DK}}=4\).

The coordinate-wise bound follows immediately. For each region \(r\),
\[
\bigl||\widehat v_d(r)|-|v_d(r)|\bigr|
\le
|\widehat v_d(r)-\eta v_d(r)|
\le
\|\widehat v_d-\eta v_d\|_\infty
\le
\|\widehat v_d-\eta v_d\|_2.
\]
Hence
\[
\sup_{1\le r\le R}
\bigl||\widehat v_d(r)|-|v_d(r)|\bigr|
\le
C_{\mathrm{DK}}\frac{\Delta_n}{\varphi}.
\]
Now define
\[
a=\|v_d\|_1,
\qquad
\widehat a=\|\widehat v_d\|_1.
\]
By the reverse triangle inequality, and since the \(\ell_1\)-norm is \(\sqrt R\)-Lipschitz with respect to the
\(\ell_2\)-norm,
\[
|\widehat a-a|
=
\bigl|\|\widehat v_d\|_1-\|v_d\|_1\bigr|
\le
\|\widehat v_d-\eta v_d\|_1
\le
\sqrt R\,\|\widehat v_d-\eta v_d\|_2.
\]
Using the eigenvector bound,
\[
|\widehat a-a|
\le
C_{\mathrm{DK}}\sqrt R\,\frac{\Delta_n}{\varphi}.
\]
For each \(r\),
\[
|\widehat s(r)-s(r)|
=
\left|
\frac{|\widehat v_d(r)|}{\widehat a}
-
\frac{|v_d(r)|}{a}
\right|.
\]
Adding and subtracting \(|v_d(r)|/\widehat a\), we obtain
\[
|\widehat s(r)-s(r)|
\le
\frac{\bigl||\widehat v_d(r)|-|v_d(r)|\bigr|}{\widehat a}
+
|v_d(r)|
\left|
\frac{1}{\widehat a}-\frac{1}{a}
\right|.
\]
Equivalently, using a slightly more convenient denominator for the first term,
\[
|\widehat s(r)-s(r)|
\le
\frac{\bigl||\widehat v_d(r)|-|v_d(r)|\bigr|}{a}
+
|\widehat v_d(r)|
\frac{|\widehat a-a|}{a\widehat a}.
\]
Since \(|\widehat v_d(r)|\le \|\widehat v_d\|_2=1\), and if
\(\widehat a\ge a/2\), then
\[
|\widehat s(r)-s(r)|
\le
\frac{C_{\mathrm{DK}}\Delta_n}{\varphi a}
+
\frac{2C_{\mathrm{DK}}\sqrt R\,\Delta_n}{\varphi a^2}.
\]
Taking the supremum over \(r\) gives
\[
\sup_{1\le r\le R}|\widehat s(r)-s(r)|
\le
C_{\mathrm{DK}}\frac{\Delta_n}{\varphi}
\left(
\frac{1}{a}+\frac{2\sqrt R}{a^2}
\right).
\]
Because \(v_d\) is unit-norm, \(a=\|v_d\|_1\ge \|v_d\|_2=1\). Since
\(|\widehat a-a|\to0\) on \(\mathcal E_n\), the condition
\(\widehat a\ge a/2\) holds for all sufficiently large \(n\) on this event.

We now prove the ranking statements. Suppose that for two regions \(r_1,r_2\),
\[
s(r_1)-s(r_2)
>
2\sup_{1\le r\le R}|\widehat s(r)-s(r)|.
\]
Then
\[
\widehat s(r_1)-\widehat s(r_2)
=
\{s(r_1)-s(r_2)\}
+
\{\widehat s(r_1)-s(r_1)\}
-
\{\widehat s(r_2)-s(r_2)\}.
\]
Therefore,
\[
\widehat s(r_1)-\widehat s(r_2)
\ge
s(r_1)-s(r_2)
-
|\widehat s(r_1)-s(r_1)|
-
|\widehat s(r_2)-s(r_2)|
>0.
\]
Hence
\[
\widehat s(r_1)>\widehat s(r_2).
\]
Finally, let \(S_K\) be the population top-\(K\) set and define
\[
\Delta_K
=
\min_{r\in S_K,\ r'\notin S_K}
\{s(r)-s(r')\}
=
s_{(K)}-s_{(K+1)}.
\]
If
\[
\Delta_K>2\sup_{1\le r\le R}|\widehat s(r)-s(r)|,
\]
then every region in \(S_K\) has estimated score larger than every region outside
\(S_K\). Therefore the estimated top-\(K\) set equals \(S_K\).

All statements have been proved on \(\mathcal E_n\), and
\(\Pr(\mathcal E_n)\to1\). This completes the proof.
\end{proof}

\label{app1}

\section{Details on PPI NBS and UC.}\label{app:Methods description}

This section provides additional details on the Psycho-physiological Interaction (PPI) analysis, Univariate Correlation comparisons (UC), and the Network-Based Statistic (NBS), which are used as benchmarks for the proposed SpARCD method. 

\paragraph*{PPI}
The PPI approach \citep{friston1997psychophysiological} is a classical technique for identifying task-dependent changes in functional connectivity. PPI models the interaction between a psychological variable (task design) and a physiological variable (the BOLD signal from a predefined seed region) within a general linear model (GLM) framework.
Formally, for each brain region \( y(t) \), the model is:
\[
y(t) = \beta_0 + \beta_1 s(t) + \beta_2 p(t) + \beta_3 [s(t) \cdot p(t)] + \varepsilon(t),
\]
where \( s(t) \) denotes the seed region’s time series (physiological regressor), \( p(t) \) represents the psychological task regressor (e.g., emotional vs.\ neutral condition, convolved with the hemodynamic response function), and \( s(t) \cdot p(t) \) is their interaction term. The coefficient \( \beta_3 \) quantifies the task-dependent modulation of connectivity between the seed and each target region.
In our implementation, we computed the PPI regressor for each subject and seed region using the \texttt{nilearn} and \texttt{statsmodels} Python packages. The design matrix included the task, seed, and PPI terms, and a separate GLM was fitted for each target region. Group-level effects were assessed via one-sample \( t \)-tests across subjects, followed by FDR correction for multiple comparisons. Regions with FDR-corrected \( p < 0.05 \) were identified as exhibiting significant task-modulated connectivity with the seed. Table~\ref{tab: ppi} summarizes the brain regions that exhibited significant task-dependent connectivity with predefined seed regions under the PPI framework in the EFMT experiment in Section \ref{sec: application to hariri}.

\paragraph*{UC}
UC analysis identifies individual ROI-to-ROI connections that exhibit significant within-subject differences in functional connectivity across different conditions or states. In our application, for each subject and condition, we computed the Pearson correlation matrix of the BOLD time series across all regions, followed by Fisher's \( z \)-transformation to stabilize variance. The upper-triangular elements of each correlation matrix were then extracted, and paired-sample \( t \)-tests were conducted across subjects for each connection to assess condition-related differences.
Multiple comparison correction was performed using the Benjamini–Hochberg FDR procedure (\( \alpha = 0.05 \)). Connections with FDR-corrected \( p < 0.05 \) were considered significant and were mapped back to their corresponding ROI pairs to form a binary significance matrix. 
Table~\ref{tab:sig_roi_connections} and Figure \ref{fig: UC NBS fMRI} (left) present the results of the UC analysis in the EFMT experiment in Section \ref{sec: application to hariri}.

\begin{figure}[tb]
\centering
\includegraphics[width=0.99\linewidth]{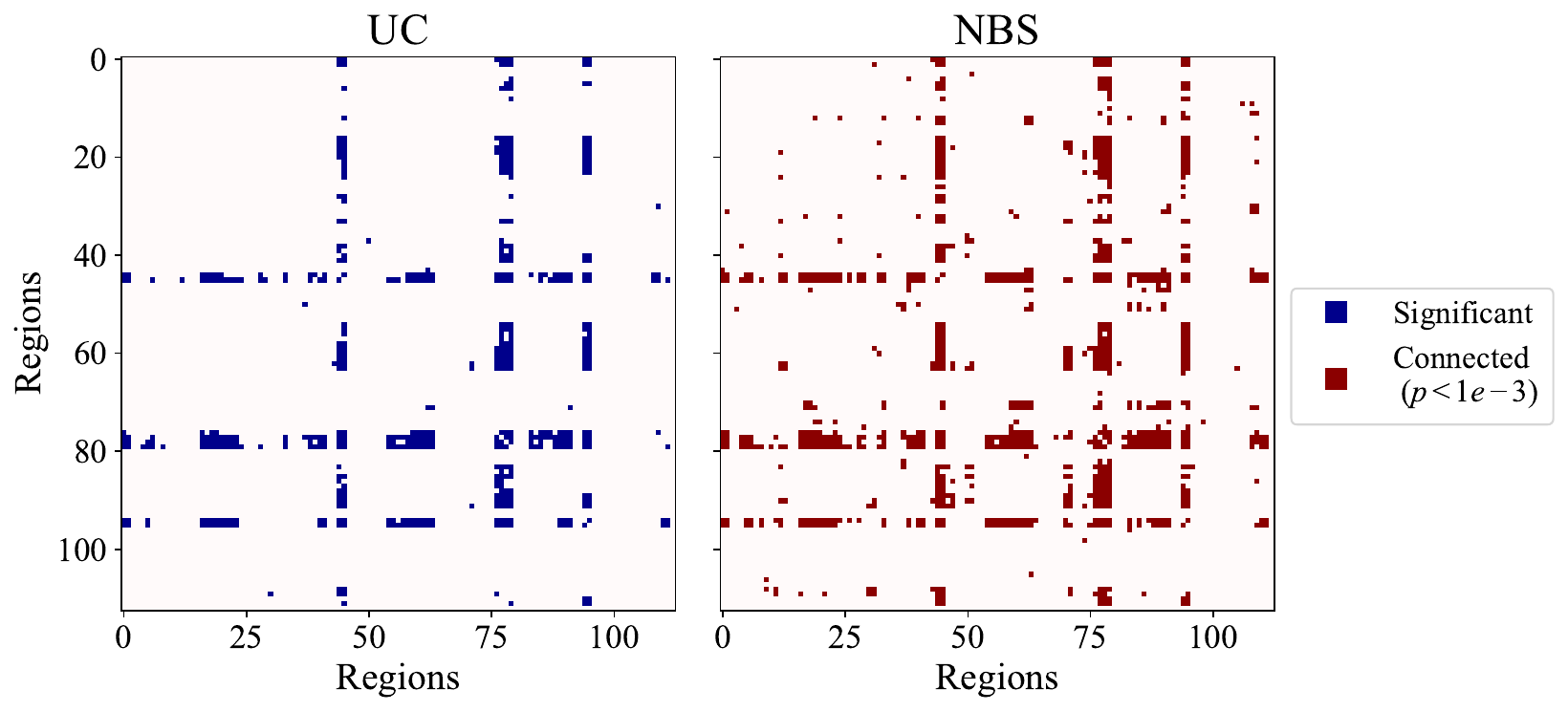}

\caption{Results of the UC and NBS methods applied to the Hariri fMRI dataset. \textbf{Left:} significant pairwise functional connections identified by the UC approach after BH multiple-testing correction, exhibiting a coherent spatial pattern of effects across the network. \textbf{Right:} NBS results, identifying a single significant connected component that is consistent with UC but includes more edges, forming a more explicitly connected network structure.}
\label{fig: UC NBS fMRI}
\end{figure}

\paragraph*{NBS}
The NBS \citep{zalesky2010network} is a graph-based method for identifying connected subnetworks that exhibit significant differences in connectivity while controlling the family-wise error rate (FWER). For each ROI-to-ROI connection, a paired-sample $t$-test was performed across subjects to compare connectivity values between task conditions. Connections exceeding a predefined primary threshold were retained, and connected components (subnetworks) were identified within the resulting suprathreshold graph. 
Statistical significance was assessed using permutation testing, in which condition labels were randomly permuted across subjects to generate a null distribution of the maximal connected-component size. Component-level p-values were obtained from this null distribution, and subnetworks were considered significant at $p<0.05$, controlling the FWER.
Significant subnetworks were then summarized by the set of ROIs participating in at least one significant connection.
We implemented NBS using the \texttt{nbs\_bct} function from the Brain Connectivity Toolbox Python package.
Figure \ref{fig: UC NBS fMRI} (right) presents the results of the NBS analysis in the EFMT experiment in Section \ref{sec: application to hariri}.

\begin{center}
\begin{table}[htb!]

\centering
\caption{Summary of PPI results by seed.}
\begin{tabularx}{\textwidth}{lX}
\toprule
\textbf{Seed 0} &
35 43 \hlroi{44} \hlroi{45} 77 79 \hlroi{94} \\
\midrule
\textbf{Seed 13} &
0 24 40 \hlroi{45} 51 \hlroi{76} \hlroi{77} \hlroi{78} \hlroi{79} 90 \hlroi{94}\\
\midrule
\textbf{Seed 77} &
0 1 2 3 4 5 6 7 8 9 10 11 12 13 14 15 16 17 18 19 20 21 22 23 24 25 26 27 28 29 30 31 32 33 34 35 36 37 38 39 40 41 42 43 45 46 47 48 50 51 52 53 54 55 56 57 58 59 60 61 62 63 64 65 66 67 68 69 70 71 72 73 74 75 78 79 80 81 82 83 84 85 86 87 88 89 90 91 92 93 94 95 96 97 98 99 100 101 102 103 104 105 106 107 108 110 111 112 \\
\midrule
\textbf{Seed 108} &
0 11 16 18 20 21 22 23 24 28 30 31 35 43 \hlroi{44} \hlroi{45} 54 55 56 57 58 59 60 70 71 \hlroi{76} \hlroi{77} \hlroi{78} \hlroi{79} \hlroi{94} \hlroi{95} 97 106 111 \\
\midrule
\textbf{Seed 109} &
0 1 2 11 15 16 17 18 19 20 21 22 23 26 27 28 30 31 43 \hlroi{44} \hlroi{45} 54 55 56 57 58 59 60 61 70 71 73 \hlroi{76} \hlroi{77} \hlroi{78} \hlroi{79} 84 85 86 87 \hlroi{94} \hlroi{95} 97 110 111 \\
\midrule
\textbf{Seed 112} & - 
\\
\bottomrule
\end{tabularx}

\label{tab: ppi}
\end{table}

\end{center}
\begin{center}
\begingroup
\footnotesize
\begin{longtable}{|c|p{0.8\textwidth}|}
\caption{Significant ROI-to-ROI connections after FDR correction.} \label{tab:sig_roi_connections} \\
\hline
ROI & Connected ROIs \\
\hline
\endfirsthead
\caption[]{Significant ROI-to-ROI connections after FDR correction.} \\
\hline
ROI & Connected ROIs \\
\hline
\endhead
\hline
\multicolumn{2}{r}{Continued on next page} \\
\hline
\endfoot
\hline
\endlastfoot
0 & \textcolor{red}{44}, \textcolor{red}{45}, \textcolor{red}{76}, \textcolor{red}{77}, \textcolor{red}{78}, \textcolor{red}{79}, \textcolor{red}{94}, \textcolor{red}{95} \\
1 & \textcolor{red}{44}, \textcolor{red}{45}, \textcolor{red}{77}, \textcolor{red}{78}, \textcolor{red}{79}, \textcolor{red}{94}, \textcolor{red}{95} \\
4 & \textcolor{red}{79} \\
5 & \textcolor{red}{78}, \textcolor{red}{79}, \textcolor{red}{94}, \textcolor{red}{95} \\
6 & \textcolor{red}{45}, \textcolor{red}{77}, \textcolor{red}{78}, \textcolor{red}{79} \\
8 & \textcolor{red}{79} \\
12 & \textcolor{red}{45} \\
16 & \textcolor{red}{44}, \textcolor{red}{45}, \textcolor{red}{76}, \textcolor{red}{77}, \textcolor{red}{78}, \textcolor{red}{79}, \textcolor{red}{94}, \textcolor{red}{95} \\
17 & \textcolor{red}{44}, \textcolor{red}{45}, \textcolor{red}{77}, \textcolor{red}{78}, \textcolor{red}{79}, \textcolor{red}{94}, \textcolor{red}{95} \\
18 & \textcolor{red}{44}, \textcolor{red}{45}, \textcolor{red}{76}, \textcolor{red}{77}, \textcolor{red}{78}, \textcolor{red}{79}, \textcolor{red}{94}, \textcolor{red}{95} \\
19 & \textcolor{red}{44}, \textcolor{red}{45}, \textcolor{red}{76}, \textcolor{red}{77}, \textcolor{red}{78}, \textcolor{red}{79}, \textcolor{red}{94}, \textcolor{red}{95} \\
20 & \textcolor{red}{44}, \textcolor{red}{45}, \textcolor{red}{77}, \textcolor{red}{78}, \textcolor{red}{79}, \textcolor{red}{94}, \textcolor{red}{95} \\
21 & \textcolor{red}{45}, \textcolor{red}{77}, \textcolor{red}{78}, \textcolor{red}{79}, \textcolor{red}{94}, \textcolor{red}{95} \\
22 & \textcolor{red}{45}, \textcolor{red}{77}, \textcolor{red}{78}, \textcolor{red}{79}, \textcolor{red}{94}, \textcolor{red}{95} \\
23 & \textcolor{red}{45}, \textcolor{red}{77}, \textcolor{red}{78}, \textcolor{red}{79}, \textcolor{red}{94}, \textcolor{red}{95} \\
24 & \textcolor{red}{45}, \textcolor{red}{79} \\
28 & \textcolor{red}{44}, \textcolor{red}{45}, \textcolor{red}{79} \\
29 & \textcolor{red}{45} \\
30 & 109 \\
33 & \textcolor{red}{44}, \textcolor{red}{45}, \textcolor{red}{77}, \textcolor{red}{78}, \textcolor{red}{79} \\
37 & 50, \textcolor{red}{77} \\
38 & \textcolor{red}{44}, \textcolor{red}{45}, \textcolor{red}{77}, \textcolor{red}{78}, \textcolor{red}{79} \\
39 & \textcolor{red}{44}, \textcolor{red}{77}, \textcolor{red}{79} \\
40 & \textcolor{red}{45}, \textcolor{red}{77}, \textcolor{red}{78}, \textcolor{red}{79}, \textcolor{red}{94}, \textcolor{red}{95} \\
41 & \textcolor{red}{44}, \textcolor{red}{45}, \textcolor{red}{77}, \textcolor{red}{78}, \textcolor{red}{79}, \textcolor{red}{94}, \textcolor{red}{95} \\
43 & \textcolor{red}{62} \\
\textcolor{red}{44} & 0, 1, 16, 17, 18, 19, 20, 28, 33, 38, 39, 41, \textcolor{red}{45}, 58, 59, 60, 61, \textcolor{red}{62}, \textcolor{red}{63}, \textcolor{red}{76}, \textcolor{red}{77}, \textcolor{red}{78}, \textcolor{red}{79}, 83, 85, 86, 88, 89, 90, 91, \textcolor{red}{94}, \textcolor{red}{95}, 108, 109 \\
\textcolor{red}{45} & 0, 1, 6, 12, 16, 17, 18, 19, 20, 21, 22, 23, 24, 28, 29, 33, 38, 40, 41, \textcolor{red}{44}, 54, 55, 56, 58, 59, 60, 61, \textcolor{red}{62}, \textcolor{red}{63}, \textcolor{red}{76}, \textcolor{red}{77}, \textcolor{red}{78}, \textcolor{red}{79}, 85, 87, 88, 89, 90, 91, \textcolor{red}{94}, \textcolor{red}{95}, 108, 109, 111 \\
50 & 37 \\
54 & \textcolor{red}{45}, \textcolor{red}{77}, \textcolor{red}{78}, \textcolor{red}{79}, \textcolor{red}{94}, \textcolor{red}{95} \\
55 & \textcolor{red}{45}, \textcolor{red}{77}, \textcolor{red}{78}, \textcolor{red}{79}, \textcolor{red}{94}, \textcolor{red}{95} \\
56 & \textcolor{red}{45}, \textcolor{red}{77}, \textcolor{red}{79}, \textcolor{red}{95} \\
57 & \textcolor{red}{77}, \textcolor{red}{79}, \textcolor{red}{94}, \textcolor{red}{95} \\
58 & \textcolor{red}{44}, \textcolor{red}{45}, \textcolor{red}{77}, \textcolor{red}{78}, \textcolor{red}{79}, \textcolor{red}{94}, \textcolor{red}{95} \\
59 & \textcolor{red}{44}, \textcolor{red}{45}, \textcolor{red}{76}, \textcolor{red}{77}, \textcolor{red}{78}, \textcolor{red}{79}, \textcolor{red}{94}, \textcolor{red}{95} \\
60 & \textcolor{red}{44}, \textcolor{red}{45}, \textcolor{red}{76}, \textcolor{red}{77}, \textcolor{red}{78}, \textcolor{red}{79}, \textcolor{red}{94}, \textcolor{red}{95} \\
61 & \textcolor{red}{44}, \textcolor{red}{45}, \textcolor{red}{76}, \textcolor{red}{77}, \textcolor{red}{78}, \textcolor{red}{79}, \textcolor{red}{94}, \textcolor{red}{95} \\
\textcolor{red}{62} & 43, \textcolor{red}{44}, \textcolor{red}{45}, 71, \textcolor{red}{76}, \textcolor{red}{77}, \textcolor{red}{78}, \textcolor{red}{79}, \textcolor{red}{94}, \textcolor{red}{95} \\
\textcolor{red}{63} & \textcolor{red}{44}, \textcolor{red}{45}, 71, \textcolor{red}{76}, \textcolor{red}{77}, \textcolor{red}{78}, \textcolor{red}{79}, \textcolor{red}{94}, \textcolor{red}{95} \\
71 & \textcolor{red}{62}, \textcolor{red}{63}, 91 \\
\textcolor{red}{76} & 0, 16, 18, 19, \textcolor{red}{44}, \textcolor{red}{45}, 59, 60, 61, \textcolor{red}{62}, \textcolor{red}{63}, \textcolor{red}{77}, \textcolor{red}{78}, \textcolor{red}{79}, 83, 85, 86, 87, 89, 90, 91, \textcolor{red}{94}, \textcolor{red}{95}, 109 \\
\textcolor{red}{77} & 0, 1, 6, 16, 17, 18, 19, 20, 21, 22, 23, 33, 37, 38, 39, 40, 41, \textcolor{red}{44}, \textcolor{red}{45}, 54, 55, 56, 57, 58, 59, 60, 61, \textcolor{red}{62}, \textcolor{red}{63}, \textcolor{red}{76}, \textcolor{red}{78}, \textcolor{red}{79}, 83, 84, 85, 86, 87, 88, 89, 90, 91, \textcolor{red}{94}, \textcolor{red}{95} \\
\textcolor{red}{78} & 0, 1, 5, 6, 16, 17, 18, 19, 20, 21, 22, 23, 33, 38, 40, 41, \textcolor{red}{44}, \textcolor{red}{45}, 54, 55, 58, 59, 60, 61, \textcolor{red}{62}, \textcolor{red}{63}, \textcolor{red}{76}, \textcolor{red}{77}, \textcolor{red}{79}, 83, 85, 88, 89, 90, 91, \textcolor{red}{94}, \textcolor{red}{95} \\
\textcolor{red}{79} & 0, 1, 4, 5, 6, 8, 16, 17, 18, 19, 20, 21, 22, 23, 24, 28, 33, 38, 39, 40, 41, \textcolor{red}{44}, \textcolor{red}{45}, 54, 55, 56, 57, 58, 59, 60, 61, \textcolor{red}{62}, \textcolor{red}{63}, \textcolor{red}{76}, \textcolor{red}{77}, \textcolor{red}{78}, 83, 84, 85, 88, 89, 90, 91, \textcolor{red}{94}, \textcolor{red}{95}, 111 \\
83 & \textcolor{red}{44}, \textcolor{red}{76}, \textcolor{red}{77}, \textcolor{red}{78}, \textcolor{red}{79} \\
84 & \textcolor{red}{77}, \textcolor{red}{79} \\
85 & \textcolor{red}{44}, \textcolor{red}{45}, \textcolor{red}{76}, \textcolor{red}{77}, \textcolor{red}{78}, \textcolor{red}{79} \\
86 & \textcolor{red}{44}, \textcolor{red}{76}, \textcolor{red}{77} \\
87 & \textcolor{red}{45}, \textcolor{red}{76}, \textcolor{red}{77} \\
88 & \textcolor{red}{44}, \textcolor{red}{45}, \textcolor{red}{77}, \textcolor{red}{78}, \textcolor{red}{79} \\
89 & \textcolor{red}{44}, \textcolor{red}{45}, \textcolor{red}{76}, \textcolor{red}{77}, \textcolor{red}{78}, \textcolor{red}{79}, \textcolor{red}{94}, \textcolor{red}{95} \\
90 & \textcolor{red}{44}, \textcolor{red}{45}, \textcolor{red}{76}, \textcolor{red}{77}, \textcolor{red}{78}, \textcolor{red}{79}, \textcolor{red}{94}, \textcolor{red}{95} \\
91 & \textcolor{red}{44}, \textcolor{red}{45}, 71, \textcolor{red}{76}, \textcolor{red}{77}, \textcolor{red}{78}, \textcolor{red}{79}, \textcolor{red}{94}, \textcolor{red}{95} \\
\textcolor{red}{94} & 0, 1, 5, 16, 17, 18, 19, 20, 21, 22, 23, 40, 41, \textcolor{red}{44}, \textcolor{red}{45}, 54, 55, 57, 58, 59, 60, 61, \textcolor{red}{62}, \textcolor{red}{63}, \textcolor{red}{76}, \textcolor{red}{77}, \textcolor{red}{78}, \textcolor{red}{79}, 89, 90, 91, \textcolor{red}{95}, 110, 111 \\
\textcolor{red}{95} & 0, 1, 5, 16, 17, 18, 19, 20, 21, 22, 23, 40, 41, \textcolor{red}{44}, \textcolor{red}{45}, 54, 55, 56, 57, 58, 59, 60, 61, \textcolor{red}{62}, \textcolor{red}{63}, \textcolor{red}{76}, \textcolor{red}{77}, \textcolor{red}{78}, \textcolor{red}{79}, 89, 90, 91, \textcolor{red}{94}, 110, 111 \\
108 & \textcolor{red}{44}, \textcolor{red}{45} \\
109 & 30, \textcolor{red}{44}, \textcolor{red}{45}, \textcolor{red}{76} \\
110 & \textcolor{red}{94}, \textcolor{red}{95} \\
111 & \textcolor{red}{45}, \textcolor{red}{79}, \textcolor{red}{94}, \textcolor{red}{95} \\
\end{longtable}
\endgroup
\end{center}

\section{Details About Simulations}\label{app:simulations details}

\paragraph{Linear Setting}
Let $l_X = \{l_X^{(1)}, \ldots, l_X^{(q)}\}$ denote the sizes of the $q$ dependency blocks in $X$. 

Let $\Sigma_X^{j} \in \mathbb{R}^{l_X^{(j)} \times l_X^{(j)}}$ be the submatrix of $\Sigma_X$ corresponding to the $j$-th block. Each block covariance is constructed as $\Sigma_X^j = U^{j} \Delta^{j} (U^{j})^\top$,
where $U^{j}$ is a randomly drawn orthogonal matrix and $\Delta^{j}$ is a diagonal matrix with entries 
$$
\Delta^j(k,k) = \left( \frac{1}{k} \right)^{\gamma} \, ,\qquad k = 1,\ldots,l^{(j)} \, .
$$
The parameter $\gamma$ controls the spectral decay within each block and thus determines the level of intra-block correlation. For example, setting $\gamma = 0$ yields the identity matrix, which corresponds to uncorrelated elements.

\textbf{Parameters for linear setting:} To increase the problem difficulty, we introduce heterogeneity across samples, such that for each sample, the block size vector $\ell_X$ was sampled from one out of the following three configurations,
\[
\{20, 20, 30, 20\}, \qquad \{20, 21, 29, 20\}, \qquad \{20, 19, 31, 20\}
\]
In $Y$, the first block was partitioned into two identical blocks. The partition was selected from one of the three following configurations
\[
\{10,10\},\{9,11\},\{11,9\}.
\]

\textbf{Full results:}
Figure~\ref{fig:linear simulation results} (upper left) depicts the PR-AUC as a function of $\gamma$. Although our method initially yields slightly lower PR-AUC at weak signal strengths, it improves rapidly and surpasses both competing approaches once  $\gamma \approx 0.9$. This indicates that as the linear dependency structure strengthens, our framework more effectively prioritizes the truly relevant regions. The other panels of Figure \ref{fig:linear simulation results} show the precision, recall, and F1 score (after multiple-testing correction). While NBS, UC, and PPI maintain relatively stable precision across all $\gamma$ values, our method exhibits a steeper improvement, overtaking all for $\gamma > 1$. Regarding recall, NBS and UC perform best at low $\gamma$ ($\approx 76\%$) but plateau as the signal grows, whereas our approach continues to improve, approaching UC near $\gamma \approx 1.9$. PPI remains limited to about 40\% recall due to its dependence on a single seed region, which constrains detection to a subset of affected clusters.

\begin{figure}[!htb]
    \centering

    \includegraphics[width = 0.8\linewidth]{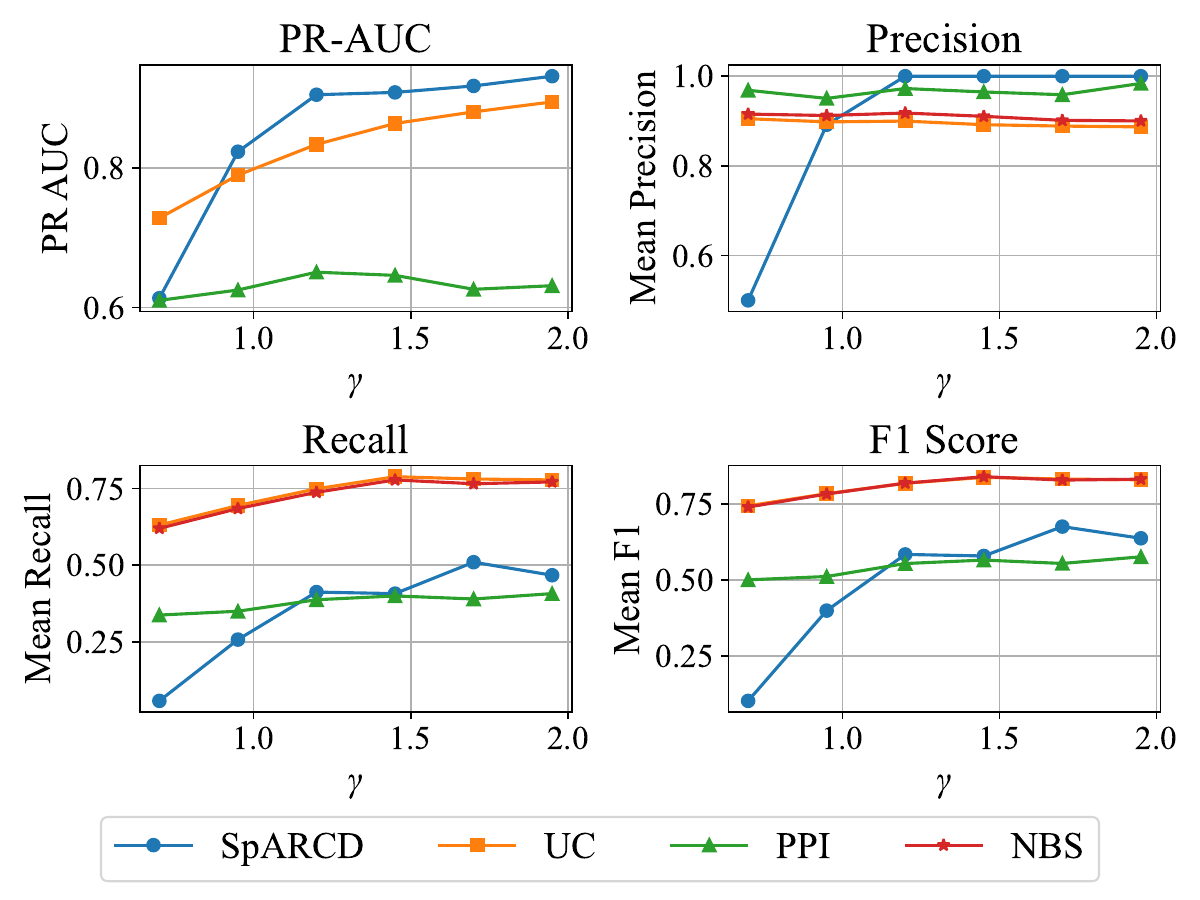}
    
    \caption{Linear simulation setting. The performance of SpARCD and other competing methods was evaluated in a linear simulation setting that utilized a block-diagonal covariance structure. The results, including the F1 score (bottom left), Precision (top left), Recall (bottom right), and PR–AUC (top right), are presented as functions of the signal strength parameter \(\gamma\). Higher values of \(\gamma\) indicate stronger and more structured linear dependencies among the clusters.}
    \label{fig:linear simulation results}
\end{figure}

\paragraph{Nonlinear Setting}

We first generated, for each sample \( i \), a set of \( d \) independent seed signals of length \( T \), denoted \( S_X^{(i)} \in \mathbb{R}^{d \times T} \). Each seed signal was drawn from a standard multivariate normal distribution
\begin{equation}\label{eq:seed}
 S_X^{(i)}(k,t) \sim \mathcal{N}(0, 1), \quad \text{for } k = 1, \ldots, q \quad \text{and } t = 1,\ldots,T \, .
\end{equation}
For $X$, we used a fixed block size, denoted $l$. The seed signals were then embedded as the first feature in each block via
\[
X^{(i)}_{(\text{nonlin})}(1 + B (k-1),t) = S_X^{(i)}(k,t) \quad k = 1,\ldots,d \, .
\]
The remaining features within each block ($r=1,\ldots,l-1$) were generated as a nonlinear transformation of the corresponding seed signal using sine functions with random phases $\phi_r$ and frequencies $f_r$ such that,
\[
X^{(i)}_{(\text{nonlin})}(1+ l  (k-1) + r,t) = \sin\left( \pi  f_{r}  S_x^{(i)}(k,t) + \phi_{r} \right) + \epsilon, \quad   r = 1,\ldots,(l-1), \quad k = 1,\ldots,q
\]
where $\epsilon \sim \mathcal N(0,\sigma^2 I)$ and $\sigma^2$ controls the noise level.
As in the linear setting, the samples $Y_{(\text{nonlin})}^{(i)}$ were generated similarly to $X_{(\text{nonlin})}^{(i)}$, with a different block structure such that the last block in $X$ is partitioned into two blocks in $Y$.

\textbf{Parameters for nonlinear setting:} The number of blocks in $X$ and $Y$ is set to $q = 8$ and $q+1=9$, respectively, with $l=18$ dependent features per block. 

\textbf{Full results:}
As in the linear case, the top panel of Figure \ref{fig:nonlinear_simulation} displays $\widehat s(r)$ values against their permutation-based null distribution. The discriminative signal is considerably clearer here: our test statistic highlights the differentiating regions with minimal background noise. The corresponding distance-correlation matrices further confirm that detected regions coincide with genuine connectivity changes between $X$ and $Y$.
 
Figure \ref{fig:nonlinear_simulation_results} summarizes the quantitative comparison. As noise levels ($\sigma$) increase, PPI exhibits moderate PR-AUC based on its raw outputs but yields no significant detections after multiple-testing correction, resulting in zero recall and precision across all $\sigma$. UC and NBS also fail to detect any signal. In contrast, our approach maintains perfect precision and recall up to $\sigma \approx 0.7$ and remains superior at higher noise levels, though with some expected performance degradation as the task becomes more challenging.

\begin{figure}[!htb]
\centering
\includegraphics[width=1\linewidth]{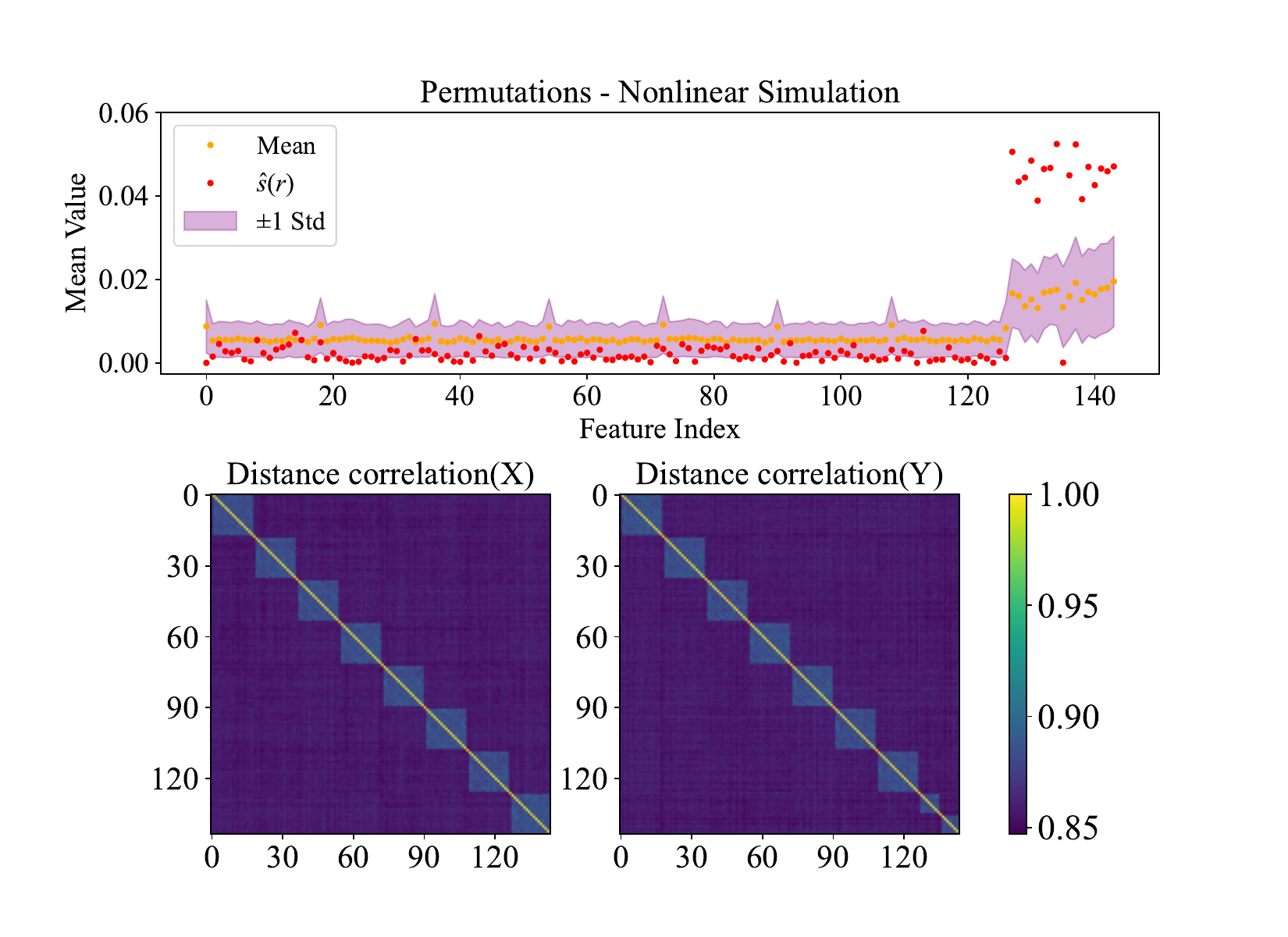}

\caption{Nonlinear simulation setting.
\textbf{Top panel:} Observed test statistic $\widehat s(r)$ (red) compared with the mean (orange) and standard deviation (purple) of the permutation-based null distribution.
\textbf{Bottom panel:} Distance-correlation matrices for datasets $X$ and $Y$, demonstrating that the detected regions correspond to genuine nonlinear differences in connectivity.}
\label{fig:nonlinear_simulation}
\end{figure}

\begin{figure}[!htb]
\centering

\includegraphics[width =0.8\linewidth]{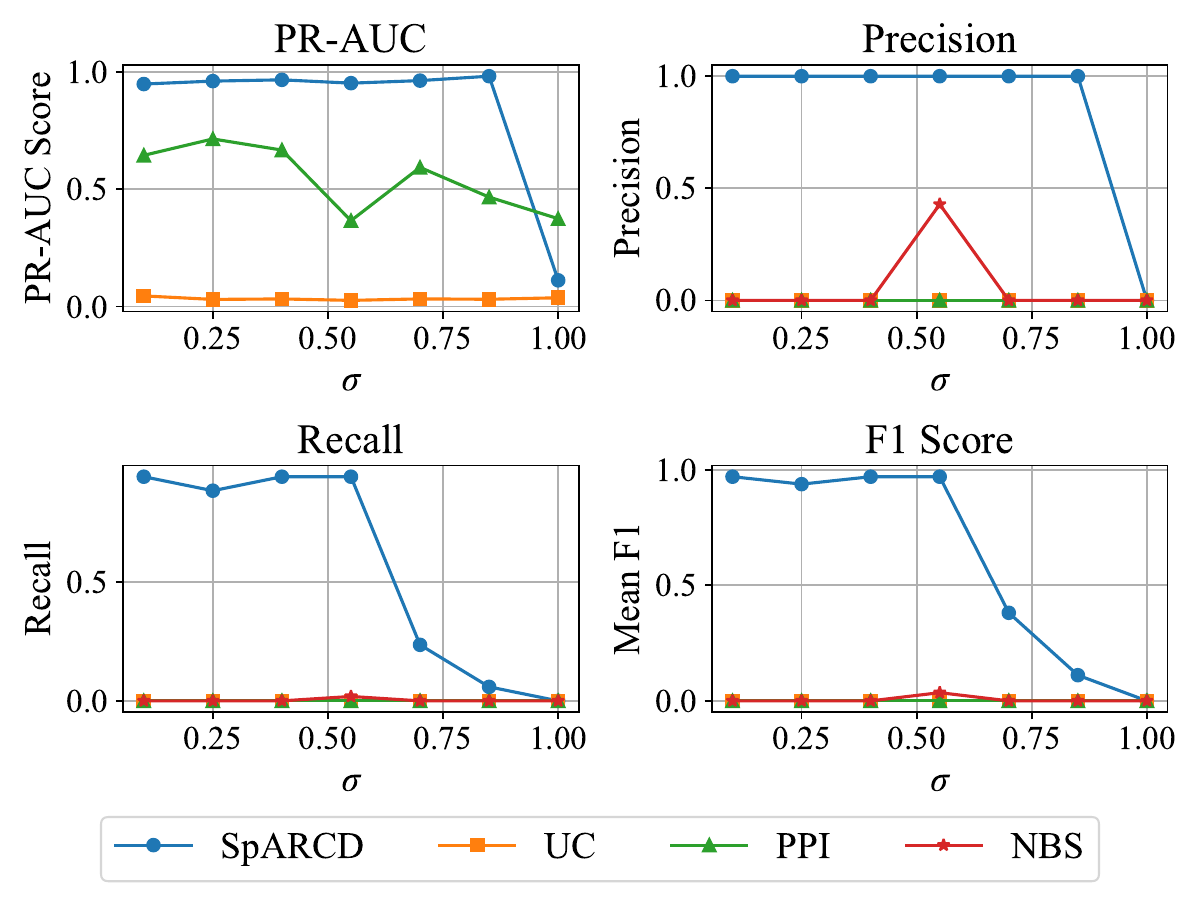}

\caption{Nonlinear simulation setting.
Performance of SpARCD and competing methods in the nonlinear simulation setting under varying noise levels ($\sigma$).
The results, including the F1 score (bottom left), Precision (top left), Recall (bottom right), and PR–AUC (top right), are presented as functions of the noise level. The results show that SpARCD maintains high accuracy and robustness even as noise increases, whereas competing methods rapidly lose power after multiple-testing correction.}
\label{fig:nonlinear_simulation_results}
\end{figure}

\paragraph*{Hybrid Setting}
\textbf{Parameters for hybrid setting:} The number of blocks in $X$ and $Y$ is set to $q = 8$ and $q+1=9$, respectively. The parameter $\gamma$, which determines the level of dependency in the linear setting, is set to $1.5$.

\textbf{Full results:}
Figure~\ref{fig: recall and precision - hybrid A} shows that NBS and UC gradually improve and slightly exceed our recall for $\alpha \geq 0.8$. Nonetheless, our method retains high precision and strong overall performance across all regimes, reflecting its robustness to varying dependency structures. PPI remains consistently inferior, highlighting the limitation of seed-based inference in detecting distributed connectivity changes.

\begin{figure}[htb!]
    \centering
    \includegraphics[width = 0.8\linewidth]{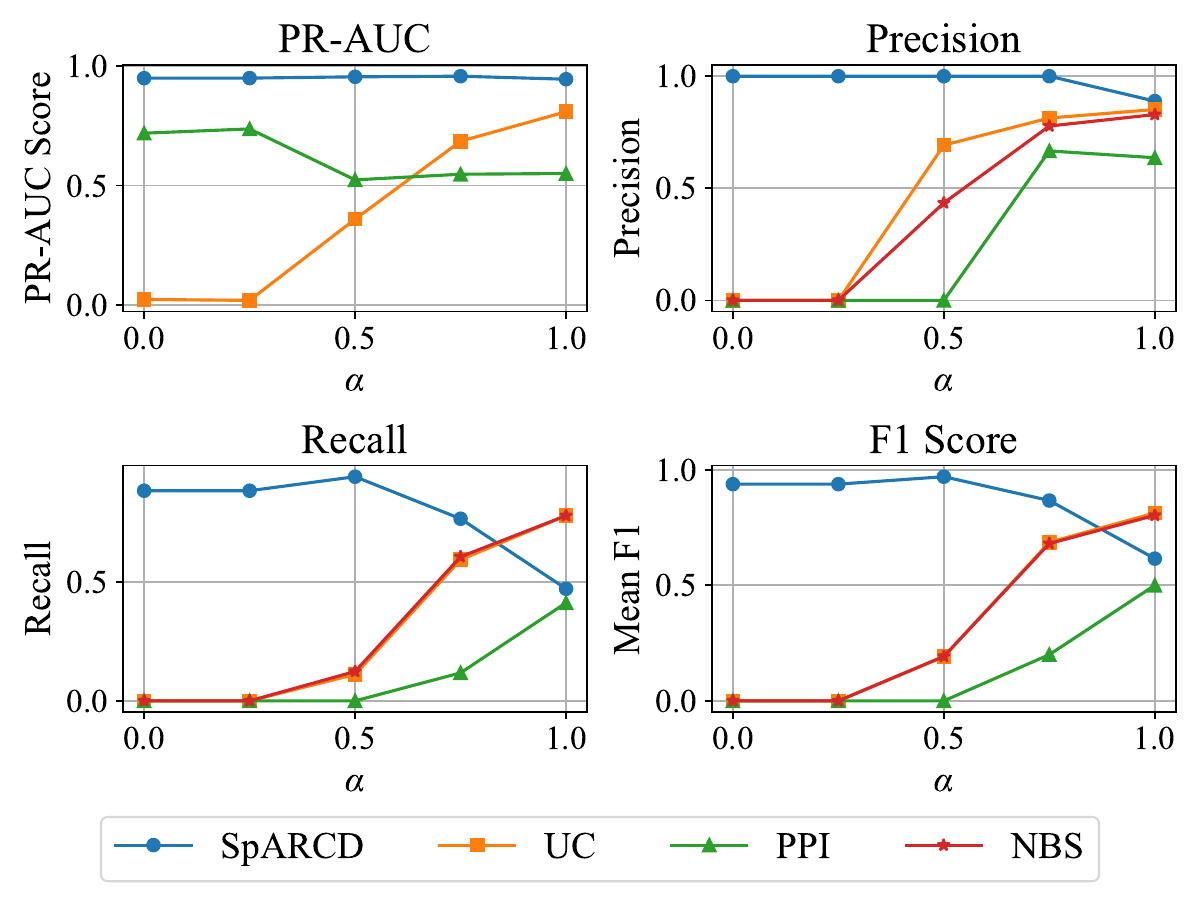}
    \caption{Performance of SpARCD and competing methods in the hybrid simulation setting with varying degrees of linearity ($\alpha$).
    The results, including the F1 score (bottom left), Precision (top left), Recall (bottom right), and PR–AUC (top right), are presented as functions of the mixing parameter $\alpha$, where smaller values indicate stronger nonlinear effects. SpARCD achieves the best overall performance for low-to-moderate $\alpha$, while maintaining competitive accuracy as dependencies become predominantly linear.}
    \label{fig: recall and precision - hybrid A}
\end{figure}

\paragraph{Edge-Wise Perturbation}
A baseline covariance matrix $\Sigma_A \in \mathbb{R}^{R \times R}$ is constructed from a single empirical sample by computing the Pearson correlation matrix of one randomly selected scan. Next, a subset of edges, denoted as $\mathcal{E} \subset \{(r,r'): r < r'\}$, is chosen uniformly at random. The size of this subset is determined by a specified sparsity parameter. The perturbed covariance matrix $\Sigma_B$ is obtained by increasing the corresponding entries,
\[
(\Sigma_B)_{r,r'} = (\Sigma_A)_{r,r'} + \delta, \quad \text{for } (r,r') \in \mathcal{E},
\]
followed by a projection onto the positive semidefinite cone to ensure a valid covariance structure.

Since the PSD projection and subsequent normalization may induce additional indirect changes in the connectivity structure, the ground truth is defined based on the resulting correlation matrices. Specifically, letting $\mathrm{Corr}_A$ and $\mathrm{Corr}_B$ denote the correlation matrices corresponding to $\Sigma_A$ and $\Sigma_B$, respectively, we define the ground truth as
\[
G = \left\{ (r,r') : |\mathrm{Corr}_B(r,r') - \mathrm{Corr}_A(r,r')| > \tau \right\},
\]
where $\tau$ is set to a data-driven threshold based on the empirical distribution of correlation differences.

\textbf{Parameters for edge-wise perturbation setting:}
We set the sparsity parameter to $0.1\%$, which defines the number of edges selected from the unique edges in the covariance matrix. Thus, $|\mathcal E| = 0.001 \times p(p-1)/2$. 

\textbf{Full results:}

Figure \ref{fig: recall and precision - edge perturbation} shows that the NBS and the UC precision decline as $\delta$ increases, reflecting a rise in false positives with stronger perturbations. PPI has moderate recall but very low precision. Our method achieves high precision for $\delta > 0.8$ and outperforms both NBS and UC in this regime, but recall remains low across all values of $\delta$.

\begin{figure}[htb!]
    \centering
    \includegraphics[width = 0.8\linewidth]{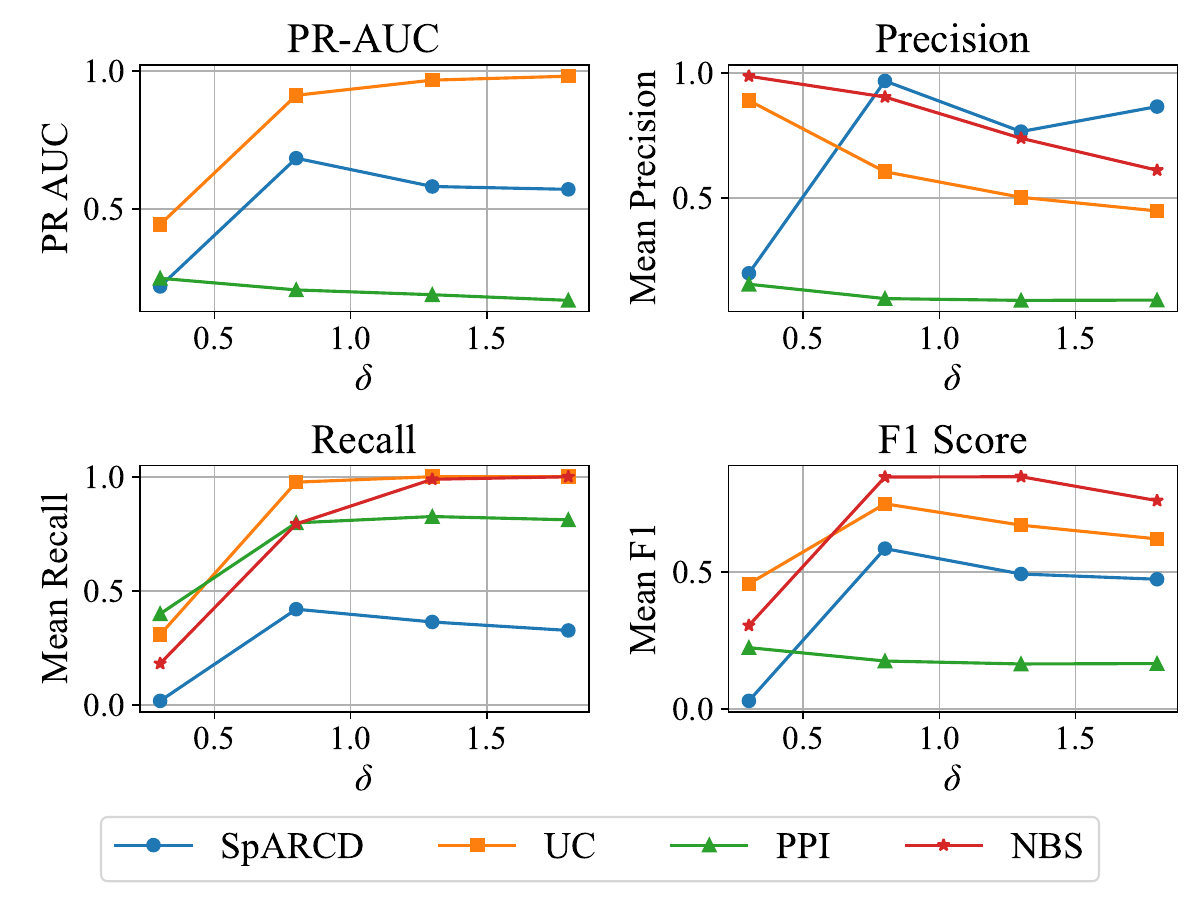}
    \caption{Performance of SpARCD and competing methods in the edge perturbation setting across different degrees of perturbation strength ($\delta$).
The results, including the F1 score (bottom left), Precision (top left), Recall (bottom right), and PR–AUC (top right), are presented as functions of the strength parameter $\delta$, where smaller values indicate weaker differences between the states.
SpARCD achieves high precision but lower recall than other methods.}
    \label{fig: recall and precision - edge perturbation}
\end{figure}

\paragraph*{Node-Wise Perturbation}

Let $W \in \mathbb{R}^{R \times R}$ be a baseline connectivity matrix, obtained by the Pearson correlation matrix of a randomly selected empirical scan. We define a normalized matrix $\tilde{W} = W / \max{\lambda(W)}$, where $\max{\lambda(W)}$ is the largest eigenvalue of W. 

We define a stable VAR(1) process:
\[
x_t = A x_{t-1} + \varepsilon_t, \quad t = 1, \ldots, T,
\]
with the transition matrix $A = \beta \tilde{W}$ where $\beta < 1$ guarantees stability (i.e., all eigenvalues of A lie inside the unit circle). The noise process $\varepsilon_t \sim \mathcal{N}(0, \sigma^2 I )$ introduces stochastic variability among regions. 

To introduce node-wise perturbations, we select a subset of nodes $\mathcal{V} \subset \{1,\ldots,R\}$ and define a mask matrix $M$ such that $M_{r,r'} = 1$ if either $r \in \mathcal{V}$ or $r' \in \mathcal{V}$, and $0$ otherwise. The perturbed covariance matrix is defined as
\[
\Sigma_B = \frac{1}{\delta} \Sigma_A + \delta M,
\]
which amplifies connections involving nodes in $\mathcal{V}$ while attenuating the remaining connections. Time series for state $B$ are then generated using the same VAR model with the perturbed covariance.

The ground truth corresponds to the set of perturbed nodes $\mathcal{V}$, and evaluation is based on the resulting differences in connectivity between the two states.
\textbf{Parameters for node-wise perturbation setting:}
The parameters $\beta$ and $\sigma^2$ are set to $0.95$ and $0.1$ respectively. A total of $10$ nodes have been selected, so $|\mathcal V| = 10$.

\textbf{Full results:}
Figure \ref{fig: recall and precision - node perturbation} demonstrates that our method is achieving both high precision and high recall under this setting. NBS, followed by UC, also attains high recall, slightly higher than ours, but with lower precision. Notably, their precision initially increases with $\delta$ but eventually declines as false positives accumulate at higher perturbation levels. PPI shows competitive PR-AUC based on its raw outputs, and a similar recall to our method. However, it produces very low precision due to a high number of false detections.

\begin{figure}[tb]
    \centering
    \includegraphics[width = 0.8\linewidth]{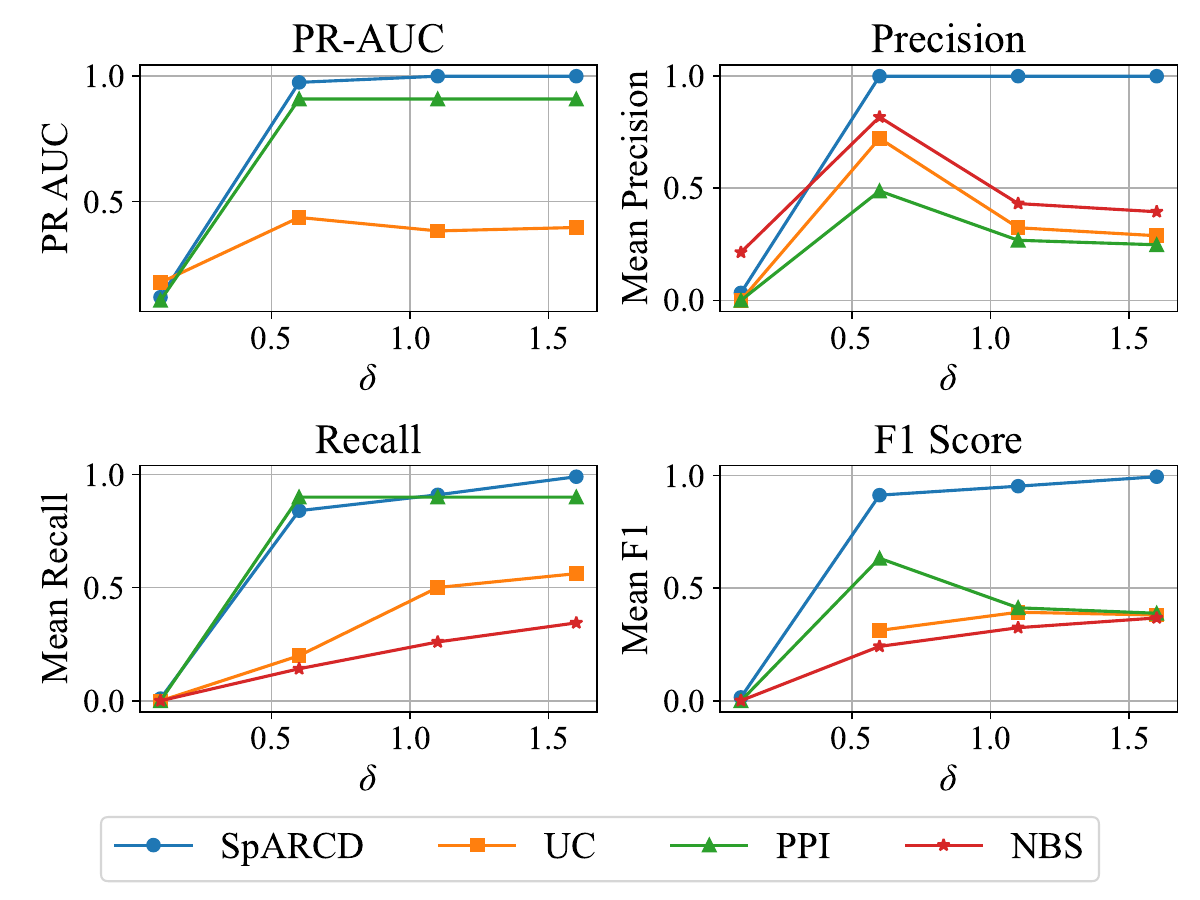}
    \caption{Performance of SpARCD and competing methods in the node perturbation setting across different degrees of perturbation strength ($\delta$).
The results, including the F1 score (bottom left), Precision (top left), Recall (bottom right), and PR–AUC (top right), are presented as functions of the strength parameter $\delta$, where smaller values indicate weaker differences between the states.
SpARCD achieves high precision and recall.}
    \label{fig: recall and precision - node perturbation}
\end{figure}


\subsection{Evaluation Metrics}

Precision is defined as the proportion of detected regions that are truly differentiating, 
\[
\text{Precision} = \frac{\text{true positives}}{\text{number of detections}} \, .
\]
Recall measures the proportion of truly differentiating regions that are successfully detected,
\[
\text{Recall} = \frac{\text{true positives}}{\text{number of ground truth regions}} \, .
\]
The F1 score is the harmonic mean of precision and recall, providing a single measure that balances the trade-off between them.
\[
F_1 = 2 \cdot \frac{\text{Precision} \cdot \text{Recall}}{\text{Precision} + \text{Recall}} \, .
\]
For each method, we compute a permutation-based p-value for every region and adjust for multiple comparisons using the BH procedure. Regions with BH-adjusted p-values below 0.05 are treated as positive detections, and precision and recall are computed accordingly. Further details on the computation of p-values for the PPI and UC methods are given in the supplementary materials.

To compute PR-AUC, regions are ranked by their continuous scores. For the UC method, we denote by $\widehat s_{\mathrm{UC}}(r,r')$ the estimated mean Pearson correlation between regions $r,r'$. In the case of PPI, a summary measure for each region is computed based on its interaction with a predefined seed region \citep{friston1997psychophysiological}. We denote the resulting summary measure for region $r$ by $\widehat s_{\mathrm{PPI}}(r)$.
For each threshold, we calculate the corresponding precision and recall, and the PR-AUC is the area under the precision-recall curve. This provides a threshold-independent summary of detection performance. For our approach and PPI, precision, recall, and PR-AUC are computed using $\widehat s(r)$ and $\widehat s_{PPI}(r)$, respectively. For the UC analysis, where $\widehat s_{UC}$ is a symmetric matrix, we use the elements of its vectorized upper-triangular part as a score for region pairs. Since NBS produces binary detections at the subnetwork level rather than continuous scores, its performance is evaluated using precision and recall only, and not via precision--recall curves.

\section{Additional Experiments on fMRI Connectivity}\label{app:additional experiments}

In addition to the primary EFMT analysis in PTSD patients (Section~\ref{sec: application to hariri}), we conducted two supplementary experiments to evaluate the robustness and generalizability of our method on EFMT data. In all cases, we compute connectivity graphs for the two states $A$ and $B$, apply the spectral differential operator, and calculate the regional test statistic $s(r)$ as described in the main text.

\paragraph*{Emotional vs. Neutral Conditions in Control Group} \label{subsec: addition efmt - non ptsd}
We applied the proposed SpARCD framework to EFMT data from the subset of control participants in the cohort described in Section~\ref{sec: problem setting} ($n = 47$). The preprocessing, ROI parcellation, and block structure are identical to those described for the PTSD cohort in Section~\ref{sec: application to hariri}. For each participant, we constructed paired datasets $X, Y \in \mathbb{R}^{n \times 113 \times 68}$ corresponding to the face ($A$) and shape ($B$) blocks of the EFMT. We then computed the regional test statistic $\widehat s(r)$, and significance was assessed via a permutation test ($B=2000$) as described in Step IV Section \ref{sec. sparcd} with BH correction at $\alpha=0.05$. 
The results were broadly consistent with those obtained for the PTSD cohort. Regions \{44, 45,76, 77, 78, 79, 94, 95\} exhibited significant connectivity differences, overlapping substantially with those identified in PTSD participants. In contrast, the regions \{ 62, 63\} had a low score in the control group, indicating that there is no differentiating connectivity for these groups.\\ 
Overall, these findings indicate that the connectivity differences between face and shape blocks are largely similar between the PTSD and control groups, suggesting that the pattern of task-evoked connectivity is largely preserved across these groups.  
\begin{figure}[tb]
\centering
\includegraphics[width=0.95\linewidth]{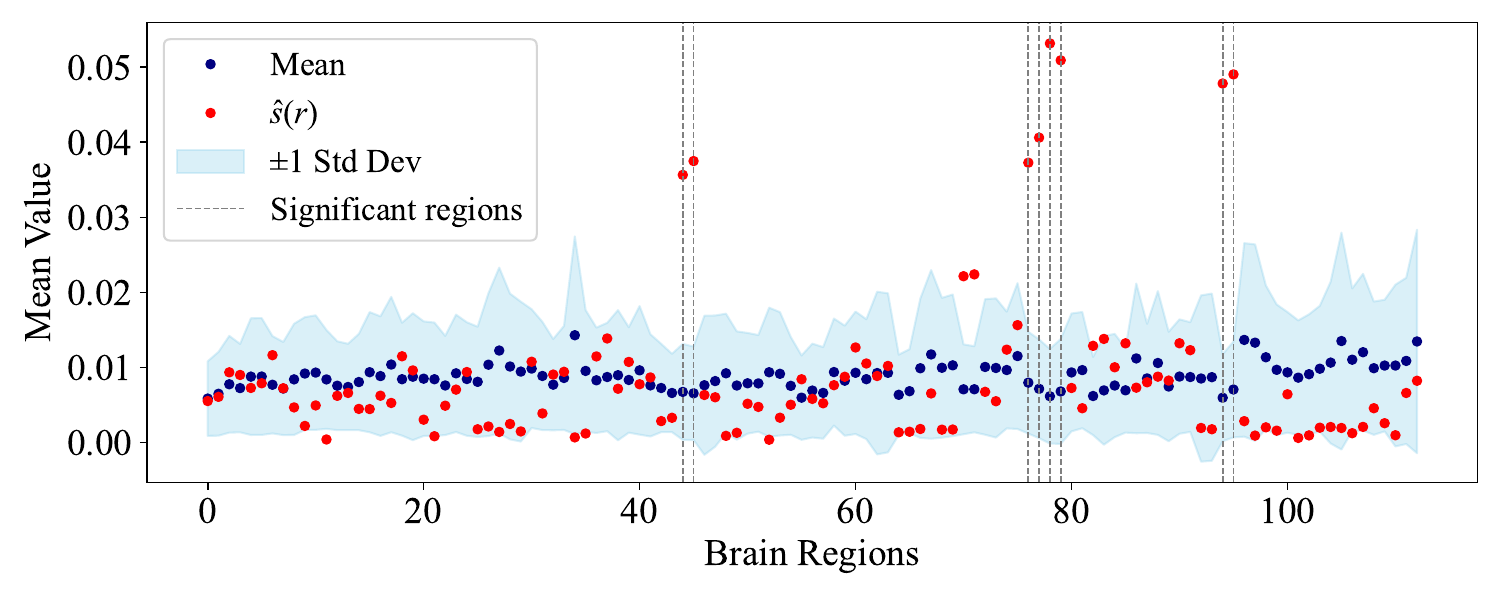}
\caption{Results of the SpARCD analysis applied to the Hariri fMRI dataset for the non-PTSD patient cohort.
The observed test statistic $\widehat s(r)$ (red) is shown together with the mean (blue) and standard deviation (light blue) of the permutation-based null distribution. Several regions exhibit clear deviations above the null expectation, indicating significant differences in functional connectivity between the emotional and neutral task conditions.}
\end{figure}
\paragraph*{EFMT vs. Resting-State fMRI in PTSD Subjects} \label{subsec: addition efmt - rfmri vs efmt}
We compared task-based connectivity during the EFMT to resting-state connectivity within the PTSD cohort described in Section~\ref{sec: problem setting} ($n = 113$). The preprocessing and ROI parcellation were identical to the main analysis. To match time series lengths, the edges of the resting-state scans were truncated. For each participant, we constructed paired datasets $X, Y \in \mathbb{R}^{n \times 113 \times T}$ corresponding to EFMT ($X$) and resting-state ($Y$) signals. We then computed the test statistic $\widehat s(r)$, with statistical significance evaluated by a standard permutation test ($B = 5000$) followed by BH correction at $\alpha = 0.05$. \\
In this analysis, most of the identified regions are subsets of those found in the comparison between emotional and neutral stimuli. Regions \{62, 63, 77, 78, 79, 92, 93, 94, 95\} showed significant differences in connectivity. Notably, regions \{92, 93\} were previously unidentified. In contrast, regions \{44, 45\} did not achieve statistical significance in this comparison.\\
A visual inspection of the $\widehat s(r)$ profiles reveals that the elevated $\widehat s(r)$ values are still apparent in the unidentified regions when compared to the permuted distributions. This suggests that the discrepancies are likely due to a higher variance in the permuted statistic, $\widehat{s}^{(\pi)}(r)$. This higher variance may occur because we are permuting between two distinctly different scan types.
\begin{figure}[tb]
\centering
\includegraphics[width=0.95\linewidth]{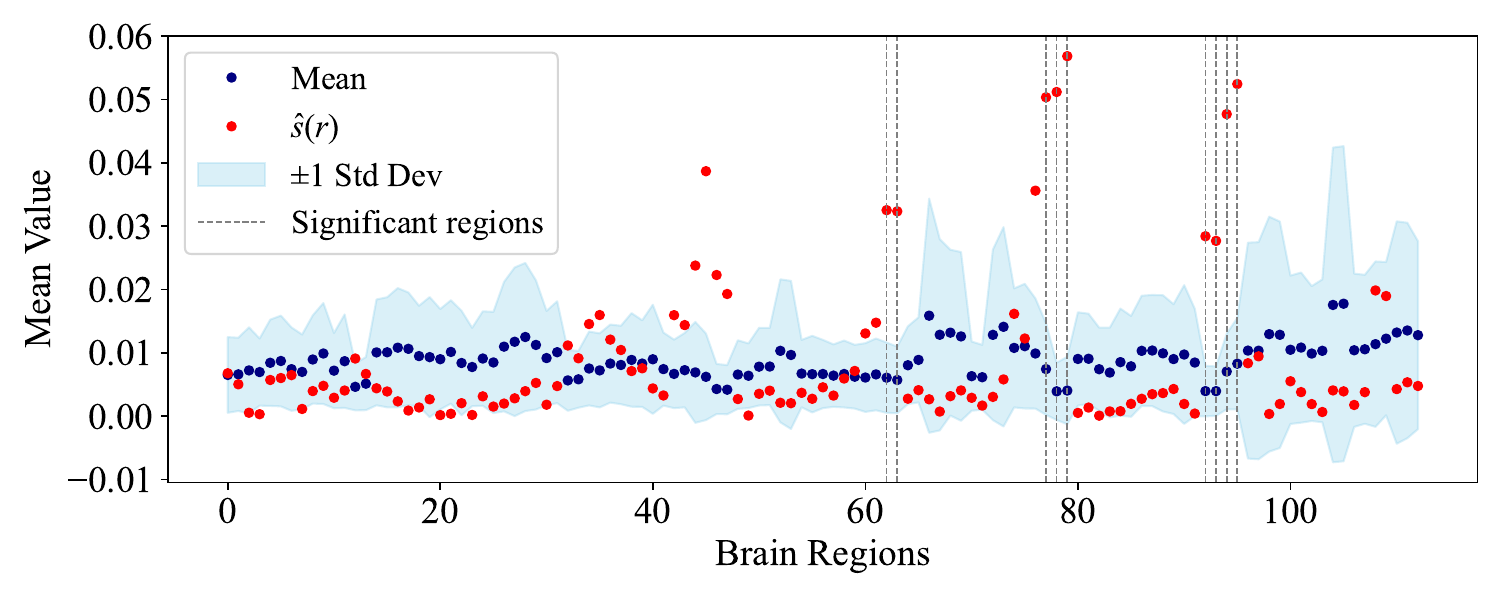 }
\caption{Results of SpARCD for the comparison of the Hariri fMRI dataset and resting-state fMRI on PTSD patients.\\
The observed test statistic $\widehat s(r)$ (red) is shown together with the mean (blue) and standard deviation (light blue) of the permutation-based null distribution. Several regions exhibit clear deviations above the null expectation, indicating significant differences in functional connectivity between EFMT and resting state conditions.}
\end{figure}

\section{Atlas-ROI mapping}\label{app:Atlas mapping}
A mapping of ROI numbers to their corresponding anatomical names.
\begin{center}
\begin{longtable}{|c|l|}
\caption{Mapping between ROI indices and brain region names (Harvard-Oxford Atlas).} \label{tab:harvard_roi_mapping} \\
\hline
ROI Index & Brain Region \\
\hline
\endfirsthead
\caption[]{Mapping between ROI indices and brain region names (Harvard-Oxford Atlas).} \\
\hline
ROI Index & Brain Region \\
\hline
\endhead
\hline
\multicolumn{2}{r}{Continued on next page} \\
\hline
\endfoot
\hline
\endlastfoot
0 & Left Frontal Pole \\
1 & Right Frontal Pole \\
2 & Left Insular Cortex \\
3 & Right Insular Cortex \\
4 & Left Superior Frontal Gyrus \\
5 & Right Superior Frontal Gyrus \\
6 & Left Middle Frontal Gyrus \\
7 & Right Middle Frontal Gyrus \\
8 & Left Inferior Frontal Gyrus pars triangularis \\
9 & Right Inferior Frontal Gyrus pars triangularis \\
10 & Left Inferior Frontal Gyrus pars opercularis \\
11 & Right Inferior Frontal Gyrus pars opercularis \\
12 & Left Right Precentral Gyrus \\
13 & Right Precentral Gyrus \\
14 & Left Temporal Pole \\
15 & Right Temporal Pole \\
16 & Left Superior Temporal Gyrus anterior division \\
17 & Right Superior Temporal Gyrus anterior division \\
18 & Left Superior Temporal Gyrus posterior division \\
19 & Right Superior Temporal Gyrus posterior division \\
20 & Left Middle Temporal Gyrus anterior division \\
21 & Right Middle Temporal Gyrus anterior division \\
22 & Left Middle Temporal Gyrus posterior division \\
23 & Right Middle Temporal Gyrus posterior division \\
24 & Left Middle Temporal Gyrus temporooccipital part \\
25 & Right Middle Temporal Gyrus temporooccipital part \\
26 & Left Inferior Temporal Gyrus anterior division \\
27 & Right Inferior Temporal Gyrus anterior division \\
28 & Left Inferior Temporal Gyrus posterior division \\
29 & Right Inferior Temporal Gyrus posterior division \\
30 & Left Inferior Temporal Gyrus temporooccipital part \\
31 & Right Inferior Temporal Gyrus temporooccipital part \\
32 & Left Postcentral Gyrus \\
33 & Right Postcentral Gyrus \\
34 & Left Superior Parietal Lobule \\
35 & Right Superior Parietal Lobule \\
36 & Left Supramarginal Gyrus anterior division \\
37 & Right Supramarginal Gyrus anterior division \\
38 & Left Supramarginal Gyrus posterior division \\
39 & Right Supramarginal Gyrus posterior division \\
40 & Left Angular Gyrus \\
41 & Right Angular Gyrus \\
42 & Left Lateral Occipital Cortex superior division \\
43 & Right Lateral Occipital Cortex superior division \\
\hlroi{44} & \hlroi{Left Lateral Occipital Cortex inferior division} \\
\hlroi{45} & \hlroi{Right Lateral Occipital Cortex inferior division} \\
46 & Left Intracalcarine Cortex \\
47 & Right Intracalcarine Cortex \\
48 & Left Frontal Medial Cortex \\
49 & Right Frontal Medial Cortex \\
50 & Left Juxtapositional Lobule Cortex (formerly Supplementary Motor Cortex) \\
51 & Right Juxtapositional Lobule Cortex (formerly Supplementary Motor Cortex) \\
52 & Left Subcallosal Cortex \\
53 & Right Subcallosal Cortex \\
54 & Left Paracingulate Gyrus \\
55 & Right Paracingulate Gyrus \\
56 & Left Cingulate Gyrus anterior division \\
57 & Right Cingulate Gyrus anterior division \\
58 & Left Cingulate Gyrus posterior division \\
59 & Right Cingulate Gyrus posterior division \\
60 & Left Precuneous Cortex \\
61 & Right Precuneous Cortex \\
\hlroi{62} & \hlroi{Left Cuneal Cortex} \\
\hlroi{63} & \hlroi{Right Cuneal Cortex} \\
64 & Left Frontal Orbital Cortex \\
65 & Right Frontal Orbital Cortex \\
66 & Left Parahippocampal Gyrus anterior division \\
67 & Right Parahippocampal Gyrus anterior division \\
68 & Left Parahippocampal Gyrus posterior division \\
69 & Right Parahippocampal Gyrus posterior division \\
70 & Left Lingual Gyrus \\
71 & Right Lingual Gyrus \\
72 & Left Temporal Fusiform Cortex anterior division \\
73 & Right Temporal Fusiform Cortex anterior division \\
74 & Left Temporal Fusiform Cortex posterior division \\
75 & Right Temporal Fusiform Cortex posterior division \\
\hlroi{76} & \hlroi{Left Temporal Occipital Fusiform Cortex} \\
\hlroi{77} & \hlroi{Right Temporal Occipital Fusiform Cortex} \\
\hlroi{78} & \hlroi{Left Occipital Fusiform Gyrus} \\
\hlroi{79} & \hlroi{Right Occipital Fusiform Gyrus} \\
80 & Left Frontal Operculum Cortex \\
81 & Right Frontal Operculum Cortex \\
82 & Left Central Opercular Cortex \\
83 & Right Central Opercular Cortex \\
84 & Left Parietal Operculum Cortex \\
85 & Right Parietal Operculum Cortex \\
86 & Left Planum Polare \\
87 & Right Planum Polare \\
88 & Left Heschl's Gyrus (includes H1 and H2) \\
89 & Right Heschl's Gyrus (includes H1 and H2) \\
90 & Left Planum Temporale \\
91 & Right Planum Temporale \\
92 & Left Supracalcarine Cortex \\
93 & Right Supracalcarine Cortex \\
\hlroi{94} & \hlroi{Left Occipital Pole} \\
\hlroi{95} & \hlroi{Right Occipital Pole} \\
96 & Left Lateral Ventricle \\
97 & Right Lateral Ventricle \\
98 & Left Thalamus \\
99 & Right Thalamus \\
100 & Left Caudate \\
101 & Right Caudate \\
102 & Left Putamen \\
103 & Right Putamen \\
104 & Left Pallidum \\
105 & Right Pallidum \\
106 & Left Hippocampus \\
107 & Right Hippocampus \\
108 & Left Amygdala \\
109 & Right Amygdala \\
110 & Left Accumbens \\
111 & Right Accumbens \\
112 & Brainstem \\
\end{longtable}
\end{center}

\bibliographystyle{elsarticle-harv} 
\bibliography{reference}






\end{document}